\newcommand{\leqnos}{\tagsleft@true\let\veqno\@@leqno}
\newcommand{\reqnos}{\tagsleft@false\let\veqno\@@eqno}
\theoremstyle{plain}
\newtheorem{thm}{Theorem}[section] 
\newtheorem{proposition}[thm]{Proposition} 
\newtheorem{cor}[thm]{Corollary} 
\theoremstyle{definition}
\newtheorem{defn}[thm]{Definition} 
\theoremstyle{remark}
\newtheorem{remark}[thm]{Remark}
\newtheorem{example}[thm]{Example} 
\newcommand{\R}{\mathbb{R}}
\renewcommand{\theta}{\vartheta}
\renewcommand{\epsilon}{\varepsilon}
\renewcommand{\P}{\mathbb{P}}
\newcommand{\E}{\mathbb{E}}
\newcommand{\NN}{\mathbb{N}}
\newcommand{\RR}{\mathbb{R}}
\newcommand{\cA}{\mathscr{A}}
\newcommand{\cC}{\mathscr{C}}
\newcommand{\cE}{\mathcal{E}}
\newcommand{\cF}{\mathcal{F}}
\newcommand{\diff}{\mathrm{d}}
\newcommand{\dd}{\,\mathrm{d}}
\newcommand{\1}{\mathbf{1}}
\newcommand*{\EX}[2][]{\E^{#1}\left [ #2 \right ]}
\newcommand*{\cEX}[3][]{\E^{#1}\left[ #2 \,\middle\vert\, #3 \right]}
\newcommand*{\as}[1]{#1\text{-a.s.}}
\title{An elementary approach to the Merton problem}
\author{Martin Herdegen\thanks{All authors: University of Warwick, Coventry CV4 7AL, UK; m.herdegen, d.hobson, j.jerome@warwick.ac.uk},  David Hobson\thanks{Communicating author}, Joseph Jerome\thanks{We thank Steve Shreve and Ioannis Karatzas for sharing their recollections of their motivations behind~\cite{lehoczky1983optimal} and \cite{karatzas1986explicit}. \newline Data sharing not applicable – no new data generated.}}
\date{\today}
\begin{document}
\maketitle

\begin{abstract}
    In this article we consider the infinite-horizon Merton investment-consumption problem in a constant-parameter Black\textendash Scholes\textendash Merton market for an agent with constant relative risk aversion $R$. The classical primal approach is to write down a candidate value function and to use a verification argument to prove that this is the solution to the problem. However, features of the problem take it outside the standard settings of stochastic control, and the existing primal verification proofs rely on parameter restrictions (especially, but not only, $R<1$), restrictions on the space of admissible strategies, or intricate approximation arguments.

    The purpose of this paper is to show that these complications can be overcome using a simple and elegant argument involving a stochastic perturbation of the utility function.
\end{abstract}

\section{Introduction and overview}
In the Merton investment-consumption problem (Merton~\cite{merton1969lifetime,merton1971optimum}) an agent seeks to maximise the expected integrated discounted utility of consumption over the infinite horizon in a model with a risky asset and a riskless bond. When parameters are constant and the utility function is of power type, it is straightforward to write down the candidate value function. However, it is more difficult to give a complete verification argument. For general strategies the wealth process may hit zero at which point the application of It\^{o}'s formula to the candidate value function breaks down; the local martingale which arises from the application of It\^{o}'s formula may fail to be a martingale; even for constant proportional strategies transversality may fail.

For all these reasons, it is difficult to give a concise, rigorous verification proof via analysis of the value function, and many textbooks either finesse the issues or restrict attention to a subclass of admissible strategies, and/or restrict attention to a subset of parameter combinations (especially $R<1$, but even then there can be substantive points which are often overlooked). The need for such a verification argument has been obviated by the development of proofs using the dual method, which provides a powerful and intuitive alternative approach, see Biagini~\cite{biagini2010expected} for a survey (and also \cite{karatzas1989optimization,karatzas1991martingale,karatzas1998methods,rogers2013optimal}). Nonetheless, it would be nice to provide a short proof based on the primal approach.\footnote{Our original motivation for returning to the Merton problem arose from consideration of a problem involving stochastic differential utility. There, dual approaches are more involved and do not cover all parameter combinations, so that the primal method is not redundant, and indeed may provide a more direct approach.} The goal of this paper is didactic -- to give a simple, brief proof that the candidate value function is the value function via the primal approach, and moreover, to give a proof which is valid for all parameter combinations for which the Merton problem is well-posed.

The first full verification of the solution to the Merton problem of which we are aware (under an assumption of positive discounting and strictly positive interest rates) is Karatzas et al~\cite{karatzas1986explicit}, which built on the previous work of Lehoczky et al~\cite{lehoczky1983optimal}. There, the idea is to solve a perturbation of the original problem in which the agent may go bankrupt, at which point they receive a residual value $P$. (Part of their motivation was to better understand the results of Merton~\cite{merton1971optimum} on HARA utilities, see also Sethi and Taskar~\cite{SethiTaskar:88}.) The solution to the perturbed problem is very clever, and is developed in the case of a general utility function, but it is also very intricate and takes many pages of calculation. Moreover, when specialised to the case of CRRA utilities, it places some assumptions on the parameter values beyond the necessary assumption of well-posedness of the Merton problem. The problem with bankruptcy is of independent interest, but more important for our purposes is the fact that, given the solution to the problem with bankruptcy for a CRRA utility, by letting $P \downarrow 0$ ($R<1$) or $P \downarrow -\infty$ ($R>1$) Karatzas et al~\cite{karatzas1986explicit} recover the solution to the original Merton problem.

In their seminal paper on transaction costs, Davis and Norman~\cite[Section 2]{davis1990portfolio} briefly consider the Merton problem without transaction costs. They assume that the proportion of wealth invested in the risky asset is bounded, and for $R<1$ they go on to prove a verification theorem for strategies restricted to this class. Further, in the case $R>1$ they propose a different perturbation, this time a deterministic perturbation of the candidate value function. The key point is that in the perturbed problem the candidate value function has a finite lower bound, and this allows Davis and Norman~\cite{davis1990portfolio} to re-apply arguments from the $R<1$ case, although the restriction to `regular' investment strategies remains. The candidate value for the perturbed problem can be used to give an upper bound on the true value function, which converges to the candidate solution to the Merton problem as the perturbation disappears. Unlike the argument in Karatzas et al~\cite{karatzas1986explicit}, the proof is quite short, but again it only works for certain parameter combinations, and more importantly it restricts attention to a subclass of admissible strategies.

Our goal is to give a complete, simple verification argument via primal methods. At its heart,
our idea is a modification of the approach in \cite{davis1990portfolio}. We perturb the utility function, which leads to a perturbed value function. However, rather than perturbing by the addition of a deterministic constant, we perturb by adding a multiple of the optimal wealth process. The great benefit is that the optimal consumption and the optimal investment are unchanged under the perturbation, which means that mathematical calculations remain strikingly simple. Moreover, these arguments are valid whenever the Merton problem is well-posed.

This paper is structured as follows. In the next section we introduce the problem, and in Section~\ref{sec:candidate} we give the candidate value function.
In Section~\ref{sec:fiat} we give a proof of the main result under a set of clearly-stated assumptions which are designed precisely to make the proof work.
Often, proofs in the stochastic control literature (see, for example, Davis and Norman~\cite{davis1990portfolio}, Fleming and Soner~\cite[Example 5.2]{fleming2006controlled} and Pham~\cite{pham:06}) artificially impose restrictions on the set of admissible strategies or on the parameter values to ensure that these assumptions are satisfied by default. In Section~\ref{sec:veri} we give our proof, which works for all parameter combinations and allows for all admissible strategies.
Finally, in a series of appendices we: first, give an example which illustrates how one of the clearly-stated assumptions may easily fail; second, give a small amount of detail on the Karatzas et al~\cite{karatzas1986explicit} and Davis and Norman~\cite{davis1990portfolio} approaches to the verification problem; third, discuss the case of logarithmic utility; fourth, consider the Merton problem under a change of numéraire and discuss the role of the parameter $\delta$; and fifth, for completeness, give a brief discussion of duality methods for the Merton problem.


Our proof is an improvement on the existing primal results in at least three important ways.
First, it places no restrictions on the class of admissible strategies: for example, unlike much of the stochastic control literature, it does not require the fraction of wealth invested in the risky asset to be bounded. (The argument in Karatzas et al~\cite{karatzas1986explicit} also applies to general investment strategies.) Second, the proof covers all parameter combinations for which the Merton problem is well-posed (and does not assume that interest rates and discounting are positive -- as we shall argue these quantities depend on the choice of accounting units, and therefore are not absolutes in themselves). Third, our proof is simple, elegant and concise and not counting the derivation of the candidate solution and candidate value function can be written up in just over one page (Theorem~\ref{thm:veri} and Corollary~\ref{cor:veri}). 

\section{The Merton problem}
\label{sec:merton}

Throughout this paper we will work on a filtered probability space $(\Omega, \mathcal{F} , \mathbb{P}, \mathbb{F}=(\mathcal{F}_t)_{t > 0}) $ satisfying the usual conditions and supporting a Brownian motion $W = (W_t)_{t \geq 0}$. We will assume a Black\textendash Scholes\textendash Merton financial market consisting of a risk-free asset with interest rate $r \in \RR$  whose price process $S^0=(S^0_t)_{t \geq 0}$ is given by $S^0_t = \exp(r t)$ and a risky asset whose price process $S= (S_t)_{t \geq 0}$ follows a geometric Brownian motion with drift $\mu \in \RR$ and volatility $\sigma > 0$:
$$\frac{dS_t}{S_t} =  \mu \dd t + \sigma \dd W_t ,\quad S_0 = s > 0.$$

An agent operating with this investment opportunity set and initial wealth $x > 0$ chooses an \emph{admissible investment-consumption strategy} $(\theta^0, \theta, C) = (\theta^0_t, \theta_t, C_t)_{t \geq 0}$, where $\theta^0_t \in \RR$ denotes the number of riskless assets held at time $t$, $\theta_t \in \RR$ denotes the number of shares held at time $t$, and $C_t \in \R_+$ represents the rate of consumption at time $t$. We require that $\theta^0, \theta, C$ are progressively measurable processes, $\theta^0$ is integrable with respect to $S^0$, $\theta^1$ is integrable with respect to $S$, $C$ is integrable with respect to the identity process\footnote{By saying that a process $X$ is integrable with respect to the identity process we mean that $\int_0^t |X_s| ds < \infty$ $\as{\P}$ for each $t > 0$.}, the \emph{wealth process} $X = (X_t)_{t \geq 0}$ defined by
\begin{equation}
\label{eq:wealth process}
X_t := \theta^0_t S^0_t + \theta_t S_t
\end{equation}
is $\as{\P}$ nonnegative and the self-financing condition,
\begin{equation}
\label{eq:SF}
X_t = x + \int_0^t \theta^0_s \dd S^0_s + \int_0^t \theta_s \dd S_s - \int_0^t C_s \dd s, \quad t \geq 0,
\end{equation}
is satisfied. We then denote by $\Pi^0_t := \tfrac{\theta^0_t S^0_t}{X_t}$ and $\Pi_t := \tfrac{\theta_t S_t}{X_t}$ the fraction of wealth invested in the riskless and risky asset at time $t$, respectively.\footnote{Strictly speaking, $\Pi^0_t$ and $\Pi^1_t$ are not defined for $X_t = 0$, but this does not matter. We can for example set $\Pi^0_t :=0 $ and $\Pi_t :=1$ for $X_t =0$.} Noting that $\Pi^0_t + \Pi_t = 1$ by \eqref{eq:wealth process}, it follows that $X$ satisfies the SDE
\begin{align}
\dd X_t &= \theta^0_t \dd S^0_t + \theta_t \dd S_t - C_t \dd t \label{eqn:wealth:theta} \\
&= X_t \Pi^0_t r \dd t + X_t \Pi_t (\mu \dd t + \sigma \dd W_t)  - C_t \dd t \notag \\
&= X_t \Pi_t  \sigma \dd W_t + \left( X_t (r + \Pi_t (\mu - r))  - C_t\right)\dd t,
\label{eqn:original wealth process}
\end{align}
subject to $X_0 = x$. This means that we can describe an admissible investment-consumption strategy for initial wealth $x > 0$ more succinctly by a pair $(\Pi,C)= (\Pi_t,C_t)_{t \geq 0}$ of progressively measurable processes, where $\Pi$ is real-valued and $C$ is nonnegative, such that the SDE \eqref{eqn:original wealth process} has a unique strong solution $X^{x, \Pi, C}$ that is $\as{\P}$ nonnegative. We denote the set of admissible investment-consumption strategies for $x > 0$ by $\cA(x)$. A consumption stream $C$ is called \emph{attainable} for initial wealth $x > 0$ if there exists an investment process $\Pi$ such that $(\Pi,C) \in \cA(x)$, and we denote the set of attainable consumption streams for $x  > 0$ by $\mathscr{C}(x)$.

The objective of the agent is to maximise the expected discounted utility of consumption over an infinite time horizon for a given initial wealth $x > 0$. To any attainable consumption stream $C \in \cC(x)$, they associate a value $J(C) \in [-\infty, \infty]$, where
\begin{equation}
\label{eq:Jdef}
J(C) := \E \left[\int_0^\infty e^{-\delta t} U\left(C_t\right) \dd t \right].
\end{equation}
Here, $\delta \in \RR$ can be seen as a discount or impatience parameter; {see Appendix \ref{sec:numeraire} for a discussion on the economic interpretation of $\delta$, which also explains why, unlike much of the literature, we include the possibility $\delta \leq 0$.}
We assume that the agent has constant relative risk aversion (CRRA) or equivalently that $U: [0, \infty) \to [-\infty, \infty)$ takes the form $U(c) = \frac{c^{1-R}}{1-R}$, where $R \in (0,\infty) \setminus \{1 \}$ is the coefficient of relative risk aversion;\footnote{We follow the convention that $0^{1-R} := \infty$ for $R > 1$.} {$R=1$ is the case of logarithmic utility $U(c)= \log (c)$ and is discussed in Appendix \ref{sec:log}.}
Note that since $R \neq 1$, the sign of $U(c)$ is uniquely determined. Thus, if $\int_0^\infty e^{-\delta t} U\left(C_t\right) \dd t$ is not integrable, we can define $J(C) :=+\infty$ when $R<1$ and $J(C):=-\infty$ when $R>1$.

In summary, the problem facing the agent is to determine
\begin{equation}\label{eqn: value function definition power law case}
    V(x) \coloneqq \sup_{C \in \mathscr{C}(x)} J(C) 
    = \sup_{C \in \mathscr{C}(x)}\E \left[ \int_0^\infty e^{-\delta t} \frac{C_t^{1-R}}{1-R} \dd t \right].
\end{equation}

\section{The candidate value function}\label{sec:candidate}
From the homogeneous structure of the problem we expect (see for example, Rogers~\cite[Proposition 1.2]{rogers2013optimal}) that $V(\kappa x) = \kappa^{1-R}V(x)$ and that if $(\hat{\Pi},\hat{C})$ is an optimal strategy in $\mathscr{A}(x)$ then $(\hat{\Pi}_\kappa =\hat{\Pi},\hat{C}_\kappa = \kappa \hat{C})$ is optimal in $\mathscr{A}(\kappa x)$ for $\kappa > 0$. For this reason, we may guess that it is optimal to invest a constant fraction of wealth in the risky asset, and to consume a constant fraction of wealth. (Of course, this will be verified later.) So, consider an investment-consumption strategy that at each time $t$, invests a constant proportion of wealth $\Pi_t = \pi$ into the risky asset and consumes a constant fraction $\xi > 0$ of wealth per unit time, i.e., $ C_t  = \xi X_t$.
	
	 Then the agent's wealth process $X = X^{x, \pi,\xi X}$ is given by
\begin{equation}
  X_t =  x\exp\left(\pi \sigma W_t + \left( r + \pi (\mu - r)  - \xi - \frac{\pi^2 \sigma^2}{2}\right)t \right).
\end{equation}
Denoting the market price of risk or \emph{Sharpe ratio} by $\lambda := \frac{\mu -r}{\sigma}$, we obtain
\begin{equation}
\label{eq:cD}
 \frac{C_t^{1-R}}{1-R} = \frac{(\xi X_t)^{1-R}}{1-R} = \ \frac{x^{1-R} \xi^{1-R}}{1-R} \exp\left(\pi \sigma (1-R) W_t + (1-R)\left(r + \lambda \sigma \pi - \xi - \frac{\pi^2\sigma^2}{2}\right)t\right).
\end{equation}
Multiplying this by $e^{-\delta t}$ and taking expectations gives
\begin{equation}
\label{eq:c:exp}
\E\left[e^{-\delta t}\frac{C_t^{1-R}}{1-R}\right] = \  x^{1-R}\frac{\xi^{1-R}}{1-R} e^{-F(\pi, \xi) t},
\end{equation}
where
\begin{equation}
\label{eq:D pi xi}
F(\pi, \xi) = F(\pi,\xi; R, \delta, \lambda, r, \sigma):=\delta - (1-R)\left(r + \lambda \sigma \pi - \frac{\pi^2\sigma^2 }{2}R - \xi \right).
\end{equation}

{Provided that $F(\pi, \xi)>0$,
we find that
\begin{equation}
\label{eq:J xi}
J(\xi X) =    \E \left[\int_0^\infty e^{-\delta t} \frac{\xi^{1-R} X_t^{1-R}}{1-R} \dd t \right] = \frac{x^{1-R}}{1-R}\frac{\xi^{1-R}}{F(\pi, \xi)}.
\end{equation}
We want to maximise this expression considered as a function of $\pi$ and $\xi$, where the maximisation is restricted to pairs $(\xi,\pi)$ for which $F(\pi,\xi)>0$.

Let $\eta$ be defined by
\begin{equation}\label{eqn: definition of eta}
\eta \coloneqq \frac{1}{R} \left[ \delta - (1-R)\left(r + \frac{\lambda^2}{2R}\right) \right],
\end{equation}
and suppose $\eta >0$. Set $\hat{\pi} = \frac{\lambda}{\sigma R}$ and $\hat{\xi} = \eta$.
Then it is easily seen that the right-hand-side of \eqref{eq:J xi} has a turning point at $(\pi,\xi) = (\hat{\pi},\hat{\xi})$, and that this turning point is in the region where $F(\pi,\xi)>0$ and gives the maximum in \eqref{eq:J xi}.}


Therefore, when $\eta>0$, the agent's optimal behaviour (at least over constant proportional strategies) and corresponding value function are given by
\begin{equation}\label{eqn: candidate value function power law case}
    \hat{\pi} = \frac{\mu - r}{\sigma^2 R}, \qquad \hat{\xi} = \eta, \qquad \hat{V}(x) \coloneqq J(\hat{\xi} X) = \frac{\eta^{-R}x^{1-R}}{1-R}.
\end{equation}

When $\eta \leq 0$, the problem is ill-posed. Indeed, if $R < 1$, then $F(\hat \pi, \xi) \downarrow 0$ as $\xi \downarrow - \frac{\eta R}{(1-R)}$ and hence $J(\xi X) \uparrow \infty$ by  \eqref{eq:J xi}. If $R > 1$, then $F(\pi, \xi) \leq F(\hat \pi,\xi) = R \eta + (1-R) \xi \leq R \eta \leq 0$ for every  $\pi \in \RR$ and $\xi \geq 0$. Hence, at least for constant proportional strategies $J(\xi X) = -\infty$. We will see in Corollary \ref{cor:ill posed} that $J(C) = -\infty$ for \emph{every} admissible consumption stream $C \in \cC(x)$.

\section{The verification argument under fiat conditions}
\label{sec:fiat}

In this section, we prove that our candidate optimal strategy $(\hat \pi, \hat \xi X)$ from \eqref{eqn: candidate value function power law case} is optimal in a subset of the class of all admissible strategies. Since the conditions defining that class are chosen precisely in such a way that the proof works, we call them \emph{fiat} conditions.

\begin{defn}
Fix $x > 0$. An investment-consumption strategy $(\Pi, C) \in \cA(x)$ is called \emph{fiat admissible} if the following three conditions are satisfied:
\begin{enumerate}
\item [(P)] The wealth process $X^{x,\Pi, C}$ is $\as{\P}$ positive.
\item [(M)] The local martingale $\int_0^\cdot e^{-\delta t} \sigma \Pi_t (X^{x,\Pi, C}_t)^{1-R}\dd W_t$ is a supermartingale.
\item [(T)] The transversality condition $\liminf_{t \to \infty} \E[e^{-\delta t} \frac{(X^{x,\Pi, C}_t)^{1-R}}{1-R}] \geq 0$ is satisfied.
\end{enumerate}
We denote the set of all fiat admissible investment-consumption strategies for $x > 0$ by $\cA^*(x)$. A consumption stream $C \in \cC(x)$ is called \emph{fiat attainable} for $x > 0$ if there is an investment process $\Pi$ such that $(\Pi, C) \in \cA^*(x)$. We denote the set of \emph{fiat attainable} consumption streams by $\cC^*(x)$.
\end{defn}

\begin{remark}
\label{rem:fiat}
As far as we are aware, the above notion of fiat admissible strategies has not been explicitly used in the literature before. However, the conditions (P), (M) and (T) or stronger versions thereof have been used explicitly or implicitly throughout the stochastic control literature on the Merton problem:
	
\begin{enumerate}
\item Condition (P) is (implicitly) assumed throughout most of the stochastic control literature dealing with the Merton problem;  a notable exception is \cite{karatzas1986explicit}. However, for $R > 1$, (P) can be assumed without loss of generality because any admissible strategy $(\Pi, C) \in \cA(x)$ violating (P) has $J(C) = -\infty$.
\item Condition (M) is implied by the stronger condition
\begin{itemize}
	\item [(M1)] The local martingale $\int_0^\cdot  \sigma \Pi_t (X^{x,\Pi, C}_t)^{1-R} e^{-\delta t}\dd W_t$ is a martingale.
\end{itemize}
It is not difficult to check that for $R < 1$, (M1) is implied by the even stronger condition
\begin{itemize}
	\item [(B)] $\Pi$ is uniformly bounded.
\end{itemize}
A common approach in the stochastic control literature is to assume (B), see e.g.~Davis and Norman~\cite[Equation (2.1)(B)]{davis1990portfolio}, Fleming and Soner~\cite[Equation IV.5.2]{fleming2006controlled}, or Pham~\cite[Equation (3.2)]{pham:06}, and then prove (M1) for $R < 1$.\footnote{Davis and Norman \cite[Proof of Theorem 2.1]{davis1990portfolio} argue that (B) implies (M1) also in the case $R > 1$ but this is not the case. See Example~\ref{eg:dncountereg}.}
\item Condition (T) is implied by the stronger standard transversality condition\footnote{Note, however, that if $R > 1$, (T) and (T1) are equivalent.}
\begin{itemize}
	\item [(T1)] $\lim_{t \to \infty} \E[e^{-\delta t} \frac{(X^{x,\Pi, C}_t)^{1-R}}{1-R}] = 0$.
\end{itemize}
When $R<1$, Davis and Norman~\cite[page 682]{davis1990portfolio} prove that (T1) is satisfied for any admissible strategy satisfying (B). Pham~\cite[Equation (3.39)]{pham:06} and Fleming and Soner~\cite[Equation IV.5.11]{fleming2006controlled} require (T1), and prove that the candidate optimal strategy has this property.
\end{enumerate}
\end{remark}

It is clear that $\cC^*(x) \subset \cC(x)$. The following result shows that the candidate optimal strategy $(\hat \pi, \hat \xi X)$ from \eqref{eqn: candidate value function power law case} is optimal in the class of fiat admissible strategies.
\begin{thm}
\label{thm:fiat}
Suppose $ \eta \coloneqq \frac{1}{R} [ \delta - (1-R)(r + \frac{\lambda^2}{2R})] > 0$. Let the function $\hat{V}: (0, \infty) \to \RR$ be given by $\hat V(x) = \frac{x^{1-R}}{1- R} \eta^{-R}$. Then for $x > 0$,
\begin{align}
    V^*(x) &:= \sup_{C \in \cC^*(x)}J(C) = J(\hat{C}) = \hat{V}(x),
\shortintertext{where the corresponding optimal investment-consumption strategy is given by $(\Pi,C)= (\hat{\Pi},\hat{C})$, where}
    \hat{\Pi} &=\frac{\lambda}{\sigma R}, \quad \hat{C} = \eta  X^{x,\hat \Pi, \hat C}.
    \label{eq:cand}
\end{align}
\end{thm}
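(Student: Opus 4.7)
The plan is a standard primal verification based on the Hamilton--Jacobi--Bellman equation, with the fiat conditions (P), (M), (T) tailored precisely so that each step goes through without further qualification. Fix $(\Pi, C) \in \cA^*(x)$ and write $X := X^{x,\Pi,C}$. Since (P) ensures $X > 0$ while $\hat V(x) = \eta^{-R} x^{1-R}/(1-R)$ is smooth on $(0,\infty)$, It\^o's formula applied to $e^{-\delta t}\hat V(X_t)$, after adding $\int_0^t e^{-\delta s} U(C_s)\dd s$ to both sides, gives an identity of the form
\[
\int_0^t e^{-\delta s}U(C_s)\dd s + e^{-\delta t}\hat V(X_t) = \hat V(x) + \int_0^t e^{-\delta s}\Phi(\Pi_s, C_s, X_s)\dd s + \int_0^t e^{-\delta s}\sigma\Pi_s\eta^{-R}X_s^{1-R}\dd W_s,
\]
where $\Phi(\pi, c, x) := \cL^{\pi,c}\hat V(x) - \delta \hat V(x) + U(c)$ and $\cL^{\pi,c}$ is the controlled generator of $X$.

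The core step is the pointwise HJB inequality $\sup_{\pi \in \RR,\, c \geq 0}\Phi(\pi, c, x) = 0$ for every $x > 0$, with equality at $(\hat\pi, \hat\xi x)$. This is a direct calculation: $\Phi$ is strictly concave in $(\pi, c)$; the first-order conditions yield $\pi = \hat\pi = \lambda/(\sigma R)$ and $c = \hat\xi x = \eta x$; and substituting these back gives a positive constant times $R\eta - [\delta - (1-R)(r + \lambda^2/(2R))]$, which vanishes by the definition \eqref{eqn: definition of eta} of $\eta$. Hence $\Phi(\Pi_s, C_s, X_s) \leq 0$ almost surely, so taking expectations in the It\^o identity and using (M) to bound the stochastic integral by $0$ yields
\[
\E\!\left[\int_0^t e^{-\delta s}U(C_s)\dd s\right] + \E\!\left[e^{-\delta t}\hat V(X_t)\right] \leq \hat V(x).
\]
Letting $t \to \infty$, monotone convergence applies to the first term because $U$ has a constant sign (nonnegative if $R<1$, nonpositive if $R>1$), and (T) gives $\liminf_{t\to\infty}\E[e^{-\delta t}\hat V(X_t)] \geq 0$. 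Taking $\liminf$ on the left then produces $J(C) \leq \hat V(x)$, and hence $V^*(x) \leq \hat V(x)$.

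For the reverse inequality, I would verify that the candidate $(\hat \Pi, \hat C) = (\hat\pi,\, \eta X^{x,\hat\pi,\eta X})$ is fiat admissible and attains $J(\hat C) = \hat V(x)$. The candidate wealth is an explicit positive geometric Brownian motion, giving (P). Explicit moment computations show $\E[e^{-\delta t}(X_t)^{1-R}] = x^{1-R} e^{-\eta t}$, which tends to $0$ as $t \to \infty$ because $\eta > 0$, providing (T) (in fact, the stronger (T1)); the same computation shows that the integrand appearing in (M) is square-integrable on each $[0,t]$, so the stochastic integral is a true $L^2$-martingale and therefore a supermartingale. Finally, $J(\hat C) = \hat V(x)$ is \eqref{eq:J xi} evaluated at $(\hat\pi,\hat\xi)$, using $F(\hat\pi,\hat\xi) = \eta$. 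The most delicate point in the entire argument is the limit passage for $R > 1$, where both $U(C_s)$ and $\hat V(X_t)$ can be very negative and a priori tend to $-\infty$; the $\liminf$ formulation in (T) is precisely what is needed to combine monotone convergence on the left with the transversality control on the right.
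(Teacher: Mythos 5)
Your proof follows essentially the same route as the paper's: apply It\^{o}'s formula to $e^{-\delta t}\hat V(X_t)$ under (P), use the pointwise HJB identity $\sup_{\pi,c}L(\pi,c;x,\hat V)=0$ with maximiser at $(\hat\pi,\eta x)$, invoke (M) to control the stochastic integral in expectation, and pass to the limit using monotone convergence on the consumption term and (T) on the terminal term; the attainability of $\hat C$ is checked by the same explicit moment computations. The only small omission compared to the paper is the harmless preliminary reduction for $R>1$ that one may assume $C^{1-R}$ is integrable with respect to the identity process (else $J(C)=-\infty$ and the inequality is trivial), which keeps the It\^{o} identity and the decomposition of $M_t$ cleanly finite.
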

\begin{proof}
First, we show that $V^*(x) \geq \hat{V}(x) = J(\hat C)$. By the arguments in Section \ref{sec:candidate}, it only remains to show that $\hat C$ is fiat attainable.
It follows from the construction of $\hat C$, that the wealth process $X^{x, \hat \Pi, \hat C}$ is $\as{\P}$ positive. Next, a similar calculation as in \eqref{eq:c:exp} shows that for each $T > 0$,
\begin{equation}
\E\left[\int_0^T e^{-2 \delta t} \sigma^2 \hat \pi^2 \left(X^{x, \hat \Pi, \hat C}_t\right)^{2-2R} \dd t\right] = \sigma^2 \hat \pi^2 \int_0^T \exp\left(\left(\frac{\lambda^2 (1-R)^2}{R^2} - 2\eta\right)t \right)\dd t < \infty.
\end{equation}
This implies that the local martingale $\int_0^\cdot \exp(-\delta t) \sigma \hat{\Pi}_t (X^{x,\hat{\Pi}, C}_t)^{1-R}\dd W_t$ is a (square-integrable) martingale and hence a supermartingale. Finally, \eqref{eq:c:exp} together with the fact that $F(\hat \pi, \eta) = \eta > 0$, implies that $(\hat \Pi, \hat{C})$ satisfies the transversality condition (T1).

Next, we show that $V^*(x) \leq \hat{V}(x)$. Let $(\Pi,C) \in \cA^*(x)$ be arbitrary. If $R > 1$, we may in addition assume without loss of generality that $C^{1-R}$ is integrable with respect to the identity process; for otherwise $J(C) = -\infty$. It suffices to argue that $J(C) \leq \hat V(x)$.

Set $X := X^{x,\Pi,C}$ for brevity and define the process $M = (M_t)_{t \geq 0}$ by
\begin{equation}
\label{eq:HJB0}
M_t = \int_0^t e^{- \delta s} U(C_s) \dd s + e^{-\delta t}{\hat V}(X_t).
\end{equation}
We want to apply It\^{o}'s formula to $M$. This is indeed possible as $\hat V$ is in $C^2(0, \infty)$ and $X$ is positive by fiat admissibility of $(\Pi, C)$. Note that ${\hat V}_x(X_t)$ is positive and ${\hat V}_{xx}(X_t)$ is negative.
Then, noting that the argument of $\hat V$ and its derivatives is $X_t$ throughout, we obtain
\begin{eqnarray*}
d M_t & = & \sigma \Pi_t X_t e^{-\delta t} {\hat V}_x \dd W_t +  e^{-\delta t}\left[ \frac{C_t^{1-R}}{1-R}  - \delta {\hat V} + (X_t(r + \sigma \lambda \Pi_t) - C_t){\hat V}_x  + \frac{\sigma^2}{2} \Pi_t^2 X_t^2 {\hat V}_{xx} \right]  \dd t \\
& = & \diff N_t + e^{-\delta t} L(\Pi_t,C_t; X_t, {\hat V}) \dd t.
\end{eqnarray*}
where $N_t = \int_0^t \sigma \Pi_s X_s e^{-\delta s} {\hat V}_x  \dd W_s= \int_0^t \eta^{-R} \sigma \Pi_s X^{1-R}_s e^{-\delta s}  \dd W_s$ is a local martingale and
\begin{equation}
\label{eq:Ldef}
L(\pi,c; x, v = v(x))  =  \frac{c^{1-R}}{1-R}  - \delta v + (x(r + \sigma \lambda \pi) - c)v_x  + \frac{\sigma^2}{2} \pi^2 x^2 v_{xx}.
\end{equation}
Maximising \eqref{eq:Ldef} over $\pi \in \RR$ and $c \geq 0$, shows that the optimisers are attained at $\hat \pi = \tfrac{\lambda}{\sigma}\tfrac{-v_x}{x v_{xx}}$ and $\hat c = v_x^{-1/R}$. Plugging in $\hat{V}$ shows that $L(\hat c, \hat \pi; x, \hat V) = 0$, which implies that $\hat{V}$ solves the Hamilton-Jacobi-Bellman equation
\begin{equation}
	\label{eq:hjb1}
\sup_{\pi \in \R, c \geq 0} L(\pi,c;x,v) = 0.
\end{equation}
It follows that
\begin{equation}
\label{eq:ineq:M}
{M}_t \leq \hat V(x) + {N}_t, \quad t \geq 0.
\end{equation}
Taking expectations and using fiat admissibility of $(\Pi, C)$ to ensure that $N$ is a supermartingale, we find for each $t \geq 0$,
\begin{equation}
\EX{M_t} \leq  \EX{\hat V(x) +  N_t} \leq \hat V(x).
\end{equation}
Taking the limit as $t $ goes to infinity, and using the monotone convergence theorem as well as the transversality condition, we obtain
\begin{align}
J(C) &= \lim_{t \to \infty} \E \left[ \int_0^t e^{- \delta s} \frac{C_s^{1-R}}{1-R} ds \right]  = \lim_{t \to \infty}  \EX{ M_t - e^{-\delta t} \hat V(X^{x,\Pi, C}_t)} \notag \\ \label{eq:ineq:J(C) V hat}
&\leq \limsup_{t \to \infty} \EX{ M_t} - \liminf_{t \to \infty} \E\left[e^{-\delta t} \hat V(X^{x,\Pi, C}_t)\right] \leq \limsup_{t \to \infty} \EX{ M_t} \leq \hat V(x).
\end{align}
This establishes the claim.
\end{proof}

\begin{remark}
\label{rem:UI}
A close inspection of the proof of Theorem \ref{thm:fiat} shows that for the optimal strategy $(\hat \Pi, \hat C)$, the process $\hat M = (\hat M_t)_{t \geq 0}$ given by $\hat M_t := \int_0^t e^{-\delta s} U(\hat C_s) \dd s + e^{-\delta t}\hat V(X^{x, \hat \Pi, \hat C})$ is a uniformly integrable martingale. Indeed, in this case $\hat N$ is a martingale and $\hat M = \hat V(x) + \hat N$. Hence, $\hat M$ is a martingale. It is uniformly integrable because, by the transversality condition (T1) and monotone convergence, equation \eqref{eq:ineq:J(C) V hat} implies that $\hat M_t$ converges in $L^1$ to $\hat M_\infty := \int_0^\infty e^{-\delta s} U(\hat C_s) \dd s$.
\end{remark}

For $R < 1$, the above fiat verification theorem can be easily generalised to a general verification theorem.
\begin{cor}
\label{cor:R<1}
Suppose $R < 1$ and $\eta>0$. Then $V(x) = \hat{V}(x)$.
\end{cor}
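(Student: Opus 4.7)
The plan is to adapt the proof of Theorem~\ref{thm:fiat} to arbitrary admissible strategies by localising at the first time wealth may vanish, exploiting the sign properties special to $R<1$ to absorb the fiat conditions (M) and (T). The inequality $V(x)\geq \hat V(x)$ is immediate, since $(\hat\Pi,\hat C)\in\cA^*(x)\subset\cA(x)$ and $V^*(x)=\hat V(x)$ by Theorem~\ref{thm:fiat}. So the task is to show $J(C)\leq \hat V(x)$ for an arbitrary $(\Pi,C)\in\cA(x)$, with wealth process $X:=X^{x,\Pi,C}$.

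For $R<1$, the utility $U$ and the candidate value $\hat V$ are both nonnegative, and $\hat V$ extends continuously to $[0,\infty)$ with $\hat V(0)=0$. Admissibility forces $C_s=0$ whenever $X_s=0$, since otherwise the drift $-C$ would push $X$ strictly negative; consequently, if $\tau_0:=\inf\{t\geq 0:X_t=0\}$ is finite, then $X\equiv 0$ and $C\equiv 0$ on $[\tau_0,\infty)$. To handle the possible failure of condition (P), I will introduce the localising sequence $\tau_n:=\inf\{t\geq 0:X_t\leq 1/n\}\wedge n$. On $[0,\tau_n]$ the wealth is bounded away from $0$, so It\^o's formula applies to $e^{-\delta\cdot}\hat V(X_\cdot)$ exactly as in Theorem~\ref{thm:fiat} and, together with the HJB inequality $L\leq 0$, yields
\[
M_{t\wedge\tau_n}\;\leq\;\hat V(x)+N_{t\wedge\tau_n},\qquad t\geq 0,
\]
where $M$ and $N$ are as in that proof. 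Since $M\geq 0$ for $R<1$, this gives $N^{\tau_n}\geq -\hat V(x)$, so the local martingale $N^{\tau_n}$ is bounded below, hence a supermartingale; therefore $\E[M_{t\wedge\tau_n}]\leq \hat V(x)$ for every $n$ and $t$, with no a priori integrability required. This is how condition (M) is bypassed.

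Passing to the limit $n\to\infty$, I use continuity of $X$ and $\hat V$, the identities $U(0)=\hat V(0)=0$, and the fact that $C$ and $X$ both vanish past $\tau_0$, to verify that $M_{t\wedge\tau_n}\to \int_0^t e^{-\delta s}U(C_s)\dd s+e^{-\delta t}\hat V(X_t)$ almost surely. Fatou's lemma then gives $\E\bigl[\int_0^t e^{-\delta s}U(C_s)\dd s+e^{-\delta t}\hat V(X_t)\bigr]\leq \hat V(x)$; discarding the nonnegative term $e^{-\delta t}\hat V(X_t)\geq 0$ (so condition (T) is irrelevant) and using monotone convergence as $t\to\infty$ yields $J(C)\leq \hat V(x)$, as required.

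The main obstacle is fiat condition (P): It\^o's formula cannot be applied through the stopping time $\tau_0$ at which wealth may vanish, which is exactly what forces the localisation by $\tau_n$. The whole scheme rests on the continuous extension $\hat V(0)=0$, which holds for $R<1$ but fails for $R>1$ (where $\hat V(0)=-\infty$); this is precisely why the short extension from Theorem~\ref{thm:fiat} works only in the case $R<1$, and why $R>1$ requires a different idea.
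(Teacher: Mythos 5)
Your proof is correct and follows essentially the same route as the paper: both establish $J(C)\leq\hat V(x)$ for arbitrary $(\Pi,C)\in\cA(x)$ by stopping at $\tau_n=\inf\{t:X_t\leq 1/n\}$, using nonnegativity of $M$ (for $R<1$) to make $N^{\tau_n}$ a supermartingale and to drop the terminal term $\hat V(X_t)\geq 0$, thereby bypassing the fiat conditions (M), (T) and (P). The only cosmetic difference is the order of limits (you send $n\to\infty$ via Fatou before $t\to\infty$, while the paper sends $t\to\infty$ first and then $n\to\infty$ by monotone convergence); both orderings work and the extra $\wedge n$ in your $\tau_n$ is harmless.
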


\begin{proof}
It is sufficient to show that (P), (M) and (T) are satisfied for general strategies, or to find a way of bypassing the relevant part of the argument.
First, (T) is automatically satisfied by the fact that $X^{1-R}/(1-R)$ is nonnegative.
Next, $M$ is nonnegative and hence $N$ is bounded below by $-\hat V(x)$ by \eqref{eq:ineq:M}. Therefore, $N$ is always a supermartingale and (M) is automatically satisfied.

Finally, to avoid imposing (P), one has to refine the argument in Theorem \ref{thm:fiat} by a stopping argument. To wit, fix an admissible strategy $(\Pi, C) \in \cA(x)$. Then for $n \in N$, set $\tau_n := \inf\{t \geq 0: X^{x, \Pi, C} \leq \frac{1}{n} \}$ and let $\tau_\infty:= \lim_{n \to \infty} \tau_n$. Then it is not difficult to check that  $X_t = X_t^{x, \Pi, C} \geq 1/n > 0$ if $t \leq \tau_n$ and $X_t = 0 = C_t$ if $t \geq \tau_\infty$.\footnote{More precisely, we have $\int_{\tau_\infty}^\infty C_s \dd s = 0$ $\as{\P}$}
Moreover, for each $n$, we get
\begin{equation*}
\EX{M^{\tau_n}_t} \leq \EX{\hat V(x) + N^{\tau_n}_t} \leq \hat V(x).
\end{equation*}
Now first taking the limit $t \to \infty$, we obtain
\begin{equation}
\E \left[ \int_0^{\tau_n} e^{- \delta s} \frac{C_s^{1-R}}{1-R} ds \right] \leq \limsup_{t \to \infty} \EX{M^{\tau_n}_t} \leq \hat V(x).
\end{equation}
Next, taking the limit $n \to \infty$, the result follows from the monotone convergence theorem and the fact that $\int_{\tau_\infty}^\infty C_s \dd s = 0$ $\as{\P}$
\end{proof}

\begin{remark}\label{rem:R<1}
The above approach of avoiding (P) is taken in \cite[Theorem 4.1]{karatzas1986explicit}. Note, however, that there the stopping argument is slightly more involved as it also requires stopping when the wealth process $X^{x, \Pi, C}$ or the quadratic variation of $\int_0^{\cdot} \sigma \Pi \dd W$ gets too large. But this additional stopping rather obfuscates the argument.

\end{remark}

\begin{remark}
	\label{rem:R>1}
If $R > 1$, extending Theorem \ref{thm:fiat} to general admissible strategies is far more involved. While condition (P) can be assumed without loss of generality (recall Part 1 of Remark~\ref{rem:fiat}), condition (M) is in general not satisfied as there are investment strategies $\Pi$ and consumption strategies $C$ such that $N$ fails to be a supermartingale, see Appendix~\ref{sec:egnonintegrable}. Note that these strategies are suboptimal because $L(\Pi_t,C_t;X_t,\hat{V})$ is (very) negative.
Finally, we have no reason to expect that the transversality condition (T) is satisfied. Indeed, (T) even fails for constant proportional strategies: If $\xi > \frac{\eta R}{R-1}$, then $F(\hat \pi,\xi) < 0$, and it follows from \eqref{eq:c:exp} that $\lim_{t \to \infty} \EX{\frac{e^{-\delta t}}{1-R} X^{x,\hat \pi, \xi X}_t} = -\infty$.
\end{remark}

\section{The general verification argument}
\label{sec:veri}

In this section, we present our general verification argument.
It is inspired by the perturbation argument of Davis and Norman, {see Appendix~\ref{ssec:dn}}. The key idea is to use the \emph{candidate optimal consumption strategy} as a \emph{stochastic perturbation} of the utility function. This yields a very elegant and simple argument that has the trio of advantages that it is no more difficult than the fiat verification argument in Theorem \ref{thm:fiat}, it does not need to distinguish between the case $R > 1$ and $R < 1$ and it does not involve any stopping argument.


The following theorem contains the solution to the stochastically perturbed Merton problem. The subsequent corollary then lets this perturbation disappear.
Recall the notations of Theorem~\ref{thm:fiat}: $\eta = \frac{1}{R} [ \delta - (1-R)(r + \frac{\lambda^2}{2R})]$, $\hat \Pi = \frac{\lambda}{\sigma R}$ and
$\hat V(x) = \frac{x^{1-R}}{1-R} \eta^{-R}$.

\begin{thm}
\label{thm:veri}
Suppose $\eta > 0$. Denote by $Y = (Y_t)_{t \geq 0}$ the candidate optimal wealth process started from unit initial wealth $1$, i.e., $Y_t := X_t^{1,\hat \Pi, \eta X}$, and by $G = (G_t)_{t \geq 0}$, the corresponding optimal consumption stream, i.e., $G_t = \eta Y_t$.
Fix $\epsilon > 0$, define the function $U_\epsilon: [0, \infty) \times (0, \infty) \to (-\infty, \infty)$ by $U_\epsilon(c, g) = \frac{(c + \epsilon g)^{1-R}}{1-R}$, and for
 an attainable consumption stream $C$ consider
\[ J_\epsilon(C) := \E \left[ \int_0^\infty e^{-\delta t} U_\epsilon(C_t, G_t) \dd t \right] = J(C + \epsilon G). \]
Then for $x > 0$,
\begin{equation}
V_\epsilon(x) :=  \sup_{C \in \mathscr{C}(x)} J_\epsilon(C) = \hat V(x + \epsilon).
\end{equation}
Moreover, the supremum is attained when $\Pi = \hat{\Pi}$ and $C= \hat{C}$ where $\hat{C} = \eta X^{x, \hat \Pi, \hat C}$.
\end{thm}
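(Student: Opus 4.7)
The plan is to rerun the HJB-plus-It\^{o} argument from the proof of Theorem~\ref{thm:fiat}, but applied to the ``shifted wealth'' process $\tilde X_t := X^{x,\Pi,C}_t + \epsilon Y_t$. The whole point of the perturbation is that $\tilde X_t \geq \epsilon Y_t > 0$ for every admissible $(\Pi, C)$: positivity and improved integrability are baked in from the outset, so all three fiat conditions become consequences rather than assumptions, and the argument closes for every admissible strategy in a single pass with no case split between $R<1$ and $R>1$ at the level of strategies. The lower bound $V_\epsilon(x) \geq \hat V(x+\epsilon)$ is immediate from scale invariance: $X^{x,\hat\Pi,\hat C}_t = x Y_t$ (both sides solve the same linear SDE), so $\hat C_t + \epsilon G_t = \eta(x+\epsilon) Y_t$, and applying \eqref{eq:J xi} with $\xi=\eta$, $\pi=\hat\pi$ and initial wealth $x+\epsilon$ (for which $F(\hat\pi,\eta)=\eta$) yields $J_\epsilon(\hat C) = (x+\epsilon)^{1-R}\eta^{-R}/(1-R) = \hat V(x+\epsilon)$.

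For the upper bound I would fix $(\Pi, C) \in \cA(x)$, set $X := X^{x,\Pi, C}$, and introduce $\tilde C_t := C_t + \epsilon G_t$, $a_t := \Pi_t X_t + \epsilon \hat\Pi Y_t$ and $\tilde \Pi_t := a_t / \tilde X_t$. Linearity of the wealth SDEs gives $d\tilde X_t = \sigma a_t \dd W_t + (r \tilde X_t + (\mu-r) a_t - \tilde C_t) \dd t$ with $\tilde X_0 = x+\epsilon$, so $\tilde X$ is itself a wealth process corresponding to proportional investment $\tilde \Pi_t$ and consumption $\tilde C_t$. Since $\tilde X > 0$, It\^{o}'s formula applies to $\hat V(\tilde X_t)$; using $U_\epsilon(C_s, G_s) = U(\tilde C_s)$, the process
\begin{equation*}
M_t := \int_0^t e^{-\delta s} U_\epsilon(C_s, G_s) \dd s + e^{-\delta t} \hat V(\tilde X_t)
\end{equation*}
decomposes as $M_t = \hat V(x+\epsilon) + N_t + \int_0^t e^{-\delta s} L(\tilde \Pi_s, \tilde C_s;\tilde X_s, \hat V) \dd s$, where $N$ is a local martingale with $N_0 = 0$ and the integrand is nonpositive by the HJB identity \eqref{eq:hjb1}. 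Hence $M_t \leq \hat V(x+\epsilon) + N_t$.

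The main obstacle is to upgrade $N$ to an actual supermartingale on each $[0,t]$ without any a priori bound on $\Pi$, and this is where the perturbation earns its keep. For $R<1$ the task is painless: $\hat V \geq 0$ and $U_\epsilon \geq 0$ force $M \geq 0$, so $N \geq -\hat V(x+\epsilon)$, a constant lower bound. For $R>1$, monotonicity of $\hat V$ together with $\tilde X \geq \epsilon Y$ and $\tilde C \geq \epsilon G$ yield the pathwise estimates
\begin{equation*}
|e^{-\delta s} \hat V(\tilde X_s)| \leq \tfrac{1}{R-1} e^{-\delta s}(\epsilon Y_s)^{1-R}\eta^{-R}, \qquad |U_\epsilon(C_s, G_s)| \leq \tfrac{1}{R-1}(\epsilon G_s)^{1-R}.
\end{equation*}
The integral of the right-hand bound in the second estimate over $[0,\infty)$ has finite expectation by the same calculation underlying \eqref{eq:c:exp}, and $\sup_{s \leq t}$ of the right-hand bound in the first estimate has finite expectation because $\sup_{s\leq t}|W_s|$ has finite exponential moments. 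Hence $\sup_{s \in [0,t]} |M_s|$ is dominated by an integrable random variable, and since $N$ differs from $M - \hat V(x+\epsilon)$ by a nonnegative increasing process, $N$ is bounded below on $[0,t]$ by an integrable random variable, hence a supermartingale. This gives $\E[M_t] \leq \hat V(x+\epsilon)$.

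To conclude I would let $t \to \infty$. Monotone convergence on the integral term (legitimate since $U_\epsilon$ has constant sign) gives $\E[\int_0^t e^{-\delta s} U_\epsilon \dd s] \to J_\epsilon(C)$, while $\E[e^{-\delta t}\hat V(\tilde X_t)] \to 0$ by the squeeze $0 \geq \E[e^{-\delta t}\hat V(\tilde X_t)] \geq \epsilon^{1-R}\hat V(1) e^{-\eta t}$ for $R>1$ and by nonnegativity of $\hat V$ for $R<1$. Combining with the previous bound yields $J_\epsilon(C) \leq \hat V(x+\epsilon)$, and comparing with the lower bound shows that $(\hat\Pi, \hat C)$ attains the supremum.
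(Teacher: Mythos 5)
Your proof is correct and follows the paper's argument essentially step for step: the lower bound via $\hat C + \epsilon G = \eta(x+\epsilon)Y$, the identification of $X+\epsilon Y$ as a wealth process fed into the HJB inequality \eqref{eq:hjb1}, and the observation that the floor $\epsilon Y$ delivers positivity, the supermartingale property and transversality for free. The only minor divergence is in justifying that the local martingale part is a supermartingale: the paper bounds it below by the uniformly integrable martingale $\Lambda^\epsilon_t - \hat V(x+\epsilon)$ with $\Lambda^\epsilon_t = \int_0^t e^{-\delta s}U(\epsilon G_s)\dd s + e^{-\delta t}\hat V(\epsilon Y_t)$ (Remark~\ref{rem:UI} applied with initial wealth $\epsilon$), uniformly in $R$, whereas you dominate $\sup_{s\le t}|M_s|$ by an integrable random variable via moment bounds on $\sup_{s\le t}Y_s^{1-R}$ with a case split between $R<1$ and $R>1$; both routes are valid.
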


\begin{proof}
First, from the SDE for the wealth process \eqref{eqn:original wealth process} we have that $X^{x, \hat \Pi, \eta X} + \epsilon Y = X^{x + \epsilon, \hat \Pi, \eta X}$. It follows that
$\hat C + \epsilon G = \eta X^{x + \epsilon, \hat \Pi, \eta X} \in \cC(x+\epsilon)$, which together with Theorem \ref{thm:fiat} implies that $J_\epsilon(\hat C) = J(\hat C + \epsilon G) =  \hat V(x+\epsilon)$.
	
It remains to show that $V_\epsilon(x) \leq \hat V(x + \epsilon)$. The argument is very similar to the one in the proof of Theorem  \ref{thm:fiat}. Let $(\Pi,C) \in \cA(x)$ be arbitrary and set $X := X^{x,\Pi, C}$ for brevity. The dynamics of $X + \epsilon Y$ are given by
\begin{equation}\label{eqn: equation for X_t + theta Y_t first use}
  \diff (X_t + \epsilon Y_t) = \left(\sigma \Pi_t X_t + \frac{\lambda}{R} \epsilon Y_t\right) \dd W_t + \left(X_t (r  +\Pi_t \sigma \lambda) - C_t + \left(r + \frac{\lambda^2}{R} - \eta \right)\epsilon Y_t \right) \dd t.
\end{equation}
Define the process  $M^\epsilon = (M^\epsilon_t)_{t \geq 0}$ by
\[ M^\epsilon_t = \int_0^t e^{-\delta s} U_\epsilon(C_s, G_s) \dd s + e^{-\delta t} \hat{V}(X_t + \epsilon Y_t). \]
{We proceed to apply It\^o's formula to $M^\epsilon$. 
Noting that the argument of $\hat{V}$ and its derivatives is $(X_t + \epsilon Y_t)$ throughout, we obtain
\begin{eqnarray*}
 \diff M^\epsilon_t & = & e^{-\delta t} \frac{(C_t + \epsilon \eta Y_t)^{1-R}}{1-R} \dd t  + e^{-\delta t} \! \left[-\delta \hat{V} \dd t +  \hat{V}_x \dd (X_t + \epsilon Y_t) + \frac{1}{2} \hat{V}_{xx} \dd [X + \epsilon Y]_t \right]\\
& = & \diff N^\epsilon_t + e^{-\delta t} L^\epsilon(\Pi_t,C_t;X_t,Y_t,\hat{V}) \dd t
\end{eqnarray*}
where $N^\epsilon_t = \int_0^t e^{-\delta s} \eta^{-R} (X_s + \epsilon Y_s)^{-R}(\sigma \Pi_s X_s + \frac{\lambda \epsilon}{R}Y_s)\dd W_s$ and, with $z = x + \epsilon y$,
\begin{eqnarray*}
\lefteqn {L^\epsilon(\pi,c;x,y,v=v(z)) } \\
& = & \frac{(c + \epsilon \eta y)^{1-R}}{1-R} - \delta v + \left[ x(r + \pi \sigma \lambda) - c + (r + \frac{\lambda^2}{2}-\eta) \epsilon y \right] v_z
           + \frac{1}{2}\left(\sigma \pi x + \frac{\lambda \epsilon y}{R} \right)^2 v_{zz} \\
& = & L \left( \frac{\pi x}{z} + \frac{\lambda \epsilon y}{\sigma Rz}, c + \epsilon \eta y; z, v=v(z) \right).
\end{eqnarray*}
Here $L$ is the operator defined in \eqref{eq:Ldef}. Then\footnote{The inequality is in fact an equality since the maximum over $\tilde{c}$ is attained at ${\hat V}^{-1/R}(z)=  \eta z =\eta(x+\epsilon y) \geq \epsilon \eta y$.}
\begin{align*}
&\sup_{\pi \in \R, c \geq 0} L^\epsilon(\pi,c;x,y,\hat{V}=\hat{V}(z))  =
\sup_{\pi \in \R, c \geq 0} L \left( \frac{\pi x}{z} + \frac{\lambda \epsilon y}{\sigma Rz}, c + \epsilon \eta y; z, \hat{V}=\hat{V}(z) \right) \\
&\qquad = \sup_{\tilde{\pi} \in \R, \tilde{c} \geq \epsilon \eta y} L \left( \tilde{\pi}, \tilde{c}, z, \hat{V}={\hat V}(z) \right)
\leq \sup_{\tilde{\pi} \in \R, \tilde{c} \geq 0} L \left( \tilde{\pi}, \tilde{c}; z, \hat{V}=\hat{V}(z) \right) = 0
\end{align*}
where the final equality follows from \eqref{eq:hjb1}.
This gives
\begin{equation}
\label{eq:Meps:ineq}
M^\epsilon_t \leq \hat V(x + \epsilon) + N^\epsilon_t, \quad t \geq 0.
\end{equation}
} 
Next, define the process $\Lambda^\epsilon = (\Lambda^\epsilon_t)_{t \geq 0}$ by
\begin{equation}
\Lambda^\epsilon_t := \int_0^t e^{-\delta s} U_\epsilon(0, G_s) \dd s + e^{-\delta t} \hat{V}(0 + \epsilon Y_t) = \int_0^t e^{-\delta s} U(\epsilon G_s) ds + e^{-\delta t} \hat{V}(\epsilon Y_t).
\end{equation}
Then $\Lambda^\epsilon \leq M^\epsilon$ by monotonicity of $U$ and $\hat{V}$. Using that $\Lambda^\epsilon$ is a (UI) martingale by Remark \ref{rem:UI}, it follows that $N^\epsilon$ is bounded below by the (UI) martingale $-\hat V(x+\epsilon) - \Lambda^\epsilon$ and hence a supermartingale.

Taking expectation in \eqref{eq:Meps:ineq}, we find for each $t \geq 0$,
\begin{equation}
\label{eq:Meps:mean ineq}
\EX{M^\epsilon_t} \leq \EX{\hat V(x + \epsilon) + N^\epsilon_t} \leq \hat V(x + \epsilon).
\end{equation}

Next, note that $X + \epsilon Y$ satisfies the transversality condition (T) since
\begin{equation}
\label{eq:T for X+epsY}
\liminf_{t \to \infty} \EX{e^{-\delta t} \frac{(X_t + \epsilon Y_t)^{1-R}}{1-R}} \geq  \epsilon^{1-R} \liminf_{t \to \infty}\EX{e^{-\delta t}  \frac{Y_t^{1-R}}{1-R}} = 0.
\end{equation}

Taking the limit in \eqref{eq:Meps:mean ineq} as $t$ goes to infinity and using \eqref{eq:T for X+epsY}, we may conclude that for any $C \in \mathscr{C}(x)$,
\begin{align}
J_\epsilon(C) &= \lim_{t \to \infty} \E \left[ \int_0^t e^{- \delta s} \frac{(C_s + \epsilon G_s)^{1-R}}{1-R} ds \right]  = \lim_{t \to \infty}  \EX{M^\epsilon_t - e^{-\delta t} \hat V(X_t+ \epsilon Y_t)} \notag \\
&\leq \limsup_{t \to \infty} \EX{M^\epsilon_t} - \liminf_{t \to \infty} \EX{e^{-\delta t} \eta^{-R} \frac{(X_t + \epsilon Y_t)^{1-R}}{1-R}} \\
&\leq \limsup_{t \to \infty} \EX{M^\epsilon_t} \leq \hat V(x+\epsilon). \qedhere
\end{align}

\end{proof}

\begin{remark}
The perturbation of the problem by the additional consumption of $\epsilon G$ elegantly and simply transforms the problem to one in which the fiat conditions (P), (M) and (T) are satisfied. Since $Y$ is positive $\as{\P}$, the same is trivially true for $X+ \epsilon Y$. Moreover, $J(\epsilon G) = \epsilon^{1-R} J(G) > -\infty$ and this allows us to easily find an integrable lower bound on $N^{\epsilon}$ and hence conclude it is a supermartingale. Again $Y$ satisfies a transversality condition (T) and so the same is trivially true for $X + \epsilon Y$.
\end{remark}

\begin{remark}
One interpretation of the theorem is that a financially-savvy benefactor gives the agent an additional consumption stream based on an initial wealth $\epsilon$ which is invested optimally by the benefactor. Then, if the agent behaves optimally with their own wealth, the two consumption streams and investment strategies remain perfectly aligned to each other, and the derivation and valuation of the candidate optimal strategy is simple and immediate.
\end{remark}

\begin{cor}
\label{cor:veri}
Suppose $ \eta > 0$. Then for $x > 0$,
\begin{align}
V(x) &:= \sup_{C \in \cC(x)}J(C) = J(\hat{C}) = \hat{V}(x).
\end{align}
\end{cor}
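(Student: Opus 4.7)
The plan is to deduce the corollary directly from Theorem~\ref{thm:veri} by sending the perturbation $\epsilon$ to zero, exploiting the fact that $U$ is monotone.

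\textbf{Lower bound.} The inequality $V(x) \geq \hat V(x)$ is immediate. By Theorem~\ref{thm:fiat}, the candidate strategy $(\hat \Pi, \hat C)$ with $\hat C = \eta X^{x,\hat\Pi,\hat C}$ lies in $\cC^*(x) \subset \cC(x)$ and attains $J(\hat C) = \hat V(x)$. Hence $V(x) \geq J(\hat C) = \hat V(x)$.

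\textbf{Upper bound.} Fix an arbitrary $C \in \cC(x)$ and $\epsilon > 0$. The utility function $U(c) = c^{1-R}/(1-R)$ is monotone increasing on $[0,\infty)$ regardless of whether $R<1$ or $R>1$ (and $G_t \geq 0$), so pointwise
\[ U(C_t) \leq U(C_t + \epsilon G_t). \]
Since $U$ has a constant sign on $(0,\infty)$ — nonnegative when $R<1$, nonpositive when $R>1$ — this pointwise inequality lifts to $J(C) \leq J_\epsilon(C)$ in $[-\infty, \infty]$, using the convention from Section~\ref{sec:merton}. Applying Theorem~\ref{thm:veri} gives
\[ J(C) \leq J_\epsilon(C) \leq V_\epsilon(x) = \hat V(x + \epsilon). \]
Since $\hat V(x) = \frac{x^{1-R}}{1-R}\eta^{-R}$ is continuous at $x>0$, letting $\epsilon \downarrow 0$ yields $J(C) \leq \hat V(x)$. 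Taking the supremum over $C \in \cC(x)$ gives $V(x) \leq \hat V(x)$, completing the proof.

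\textbf{Main obstacle.} There is essentially no obstacle: the substantive work has already been done in Theorems~\ref{thm:fiat} and~\ref{thm:veri}. The only subtlety is that one must handle both signs of $1-R$ uniformly and allow the possibility that $J(C) = \pm\infty$. Monotonicity of $U$ (and the definite sign of $U$ on $(0,\infty)$) take care of both issues simultaneously, and continuity of $\hat V$ on $(0,\infty)$ is trivial from its explicit form. Unlike in the $R<1$ case (Corollary~\ref{cor:R<1}) or the intricate treatment in \cite{karatzas1986explicit}, no stopping arguments, no case distinctions between $R<1$ and $R>1$, and no explicit verification of (P), (M) or (T) for general strategies are needed — all of these are absorbed into the stochastic perturbation.
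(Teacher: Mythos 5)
Your proof is correct and follows exactly the paper's own argument: the lower bound comes from Theorem~\ref{thm:fiat}, and the upper bound from the chain $J(C) \leq J_\epsilon(C) \leq V_\epsilon(x) = \hat V(x+\epsilon)$ followed by letting $\epsilon \downarrow 0$. Your additional remarks on monotonicity and the constant sign of $U$ simply make explicit the justification for $J(C) \leq J_\epsilon(C)$ that the paper leaves implicit.
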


\begin{proof}
The equality $J(\hat C) =  \hat{V}(x)$ follows from Theorem \ref{thm:fiat}. It remains to establish that $V(x) \leq \hat V(x)$.
Using the notation of Theorem \ref{thm:veri}, for any $C \in \mathscr{C}(x)$, we get $J(C) \leq J_\epsilon(C) \leq V_\epsilon(x) = \hat{V}(x + \epsilon)$. Letting $\epsilon \downarrow 0$, we conclude that $V(x) \leq \hat{V}(x)$.
\end{proof}

We finish this section by showing that in the case $R > 1$ if $\eta \leq 0$, \emph{every} $C \in \cC(x)$ has $J(C) = -\infty$.
\begin{cor}
\label{cor:ill posed}
Suppose that $R > 1$ and $ \eta \leq 0$. Then
\begin{equation}
V(x) = \sup_{C \in \cC(x)}J(C)  = -\infty.
\end{equation}
\end{cor}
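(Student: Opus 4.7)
The plan is to reduce the ill-posed case to the well-posed case handled in Corollary~\ref{cor:veri} by perturbing the discount rate $\delta$ upward. The crucial observation I would exploit is that the admissible class $\cC(x)$ depends only on the market parameters $(r,\mu,\sigma)$ and on the initial wealth $x$, not on $\delta$; hence Corollary~\ref{cor:veri} applies verbatim to any auxiliary problem obtained by replacing $\delta$ with a larger constant, provided the corresponding Merton exponent is strictly positive.

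Concretely, for each $\epsilon > 0$ I would set $\tilde\delta_\epsilon := \delta + R\epsilon$. A direct inspection of the definition \eqref{eqn: definition of eta} shows that the Merton exponent associated with $\tilde\delta_\epsilon$ is $\tilde\eta_\epsilon = \eta + \epsilon$, which is strictly positive as soon as $\epsilon > -\eta \geq 0$. Corollary~\ref{cor:veri}, applied with $\tilde\delta_\epsilon$ in place of $\delta$, then delivers
\begin{equation}
    \sup_{C \in \cC(x)} \E\!\left[\int_0^\infty e^{-\tilde\delta_\epsilon t}\, \frac{C_t^{1-R}}{1-R}\, \dd t\right] = \frac{x^{1-R}}{1-R}(\eta + \epsilon)^{-R}.
\end{equation}
The next step is to transfer this bound back to the original discount $\delta$. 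Since $R>1$, the utility $U(c) = \frac{c^{1-R}}{1-R}$ is nonpositive, and because $\tilde\delta_\epsilon > \delta$ we have $e^{-\tilde\delta_\epsilon t} \leq e^{-\delta t}$; multiplying the nonpositive $U(C_t)$ by the smaller weight therefore gives the reversed pathwise inequality $e^{-\delta t} U(C_t) \leq e^{-\tilde\delta_\epsilon t} U(C_t)$ for every admissible $C$. Integrating and taking expectations yields $J(C) \leq \frac{x^{1-R}}{1-R}(\eta + \epsilon)^{-R}$ for every $C \in \cC(x)$ and every $\epsilon > -\eta$.

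Finally, I would let $\epsilon \downarrow -\eta$: since $R>1$, $(\eta+\epsilon)^{-R} \to +\infty$ while $\frac{x^{1-R}}{1-R}<0$, so the right-hand side tends to $-\infty$. This forces $J(C) = -\infty$ for every $C \in \cC(x)$, and hence $V(x)=-\infty$, as required. The only potential subtlety is checking that Corollary~\ref{cor:veri} is genuinely insensitive to the choice of $\delta$, which is immediate because the admissibility definition in Section~\ref{sec:merton} is purely dynamic and does not involve $\delta$; the sign bookkeeping in the discount-rate comparison uses $R>1$ in an essential way, in exact parallel with how the corresponding step in Section~\ref{sec:candidate} dismisses constant proportional strategies.
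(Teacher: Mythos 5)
Your proposal is correct and coincides with the paper's own argument: the paper sets $\delta_n := \delta + R(\tfrac{1}{n}-\eta)$ so that the perturbed Merton exponent equals $\tfrac{1}{n}>0$, uses exactly your sign comparison $e^{-\delta t}U(C_t) \leq e^{-\delta_n t}U(C_t)$ (valid since $U\leq 0$ and $\delta_n>\delta$), bounds $J(C)$ by $\frac{x^{1-R}}{1-R}\eta_n^{-R}$ via the verification theorem for the well-posed auxiliary problem, and lets $n\to\infty$ --- which is your $\epsilon \downarrow -\eta$ under the substitution $\epsilon = \tfrac{1}{n}-\eta$. Your explicit remark that $\cC(x)$ does not depend on $\delta$ is a point the paper leaves implicit, but otherwise the two proofs are the same.
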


\begin{proof}
Fix $C\in \cC(x)$. It suffices to show that $J(C) = -\infty$. We use an approximation argument. For $n \in \NN$ set $\delta_n := \delta + R(\frac{1}{n} - \eta)$. Then $\delta_n > \delta$ and $\eta_n := \frac{1}{R} [ \delta_n - (1-R)(r + \frac{\lambda^2}{2R})] = \frac{1}{n} > 0$.
Then using that $U(c) < 0$ for $c \geq 0$, it follows from Theorem \ref{thm:veri}
\begin{equation}
J(C) = \EX{\int_0^\infty e^{-\delta s} U(C_s) \dd s} \leq \EX{\int_0^\infty e^{-\delta_n s} U(C_s) \dd s} \leq \frac{x^{1-R}}{1-R} (\eta_n)^{-R} , \quad n \in \NN.
\end{equation}
Taking the limit on the right hand side as $n$ goes to $\infty$, it follows that $J(C) = -\infty$.
\end{proof}

\newpage

\bibliographystyle{plain}
\bibliography{merton}

\begin{thebibliography}{10}

\bibitem{biagini2010expected}
S.~Biagini.
\newblock Expected utility maximization: Duality methods.
\newblock In R.~Cont, editor, {\em Encyclopedia of Quantitative Finance}. Wiley
  Online Library, 2010.

\bibitem{davis1990portfolio}
M.~H.~A. Davis and A.~R. Norman.
\newblock Portfolio selection with transaction costs.
\newblock {\em Math. Oper. Res.}, 15(4):676--713, 1990.

\bibitem{fleming2006controlled}
W.~H. Fleming and H.~M. Soner.
\newblock {\em Controlled {M}arkov processes and viscosity solutions},
  volume~25.
\newblock Springer, New York, second edition, 2006.

\bibitem{herdegen:17}
M.~Herdegen.
\newblock No-arbitrage in a num\'{e}raire-independent modeling framework.
\newblock {\em Math. Finance}, 27(2):568--603, 2017.

\bibitem{karatzas1989optimization}
I.~Karatzas.
\newblock Optimization problems in the theory of continuous trading.
\newblock {\em SIAM J. Control Optim.}, 27(6):1221--1259, 1989.

\bibitem{karatzas1986explicit}
I.~Karatzas, J.~P. Lehoczky, S.~P. Sethi, and S.~E. Shreve.
\newblock Explicit solution of a general consumption/investment problem.
\newblock {\em Math. Oper. Res.}, 11(2):261--294, 1986.

\bibitem{karatzas1991martingale}
I.~Karatzas, J.~P. Lehoczky, S.~E. Shreve, and G.~L. Xu.
\newblock Martingale and duality methods for utility maximization in an
  incomplete market.
\newblock {\em SIAM J. Control. Optim.}, 29(3):702--730, 1991.

\bibitem{karatzas1998methods}
I.~Karatzas and S.~E. Shreve.
\newblock {\em Methods of mathematical finance}, volume~39 of {\em Applications
  of Mathematics (New York)}.
\newblock Springer-Verlag, New York, 1998.

\bibitem{lehoczky1983optimal}
J.~Lehoczky, S.~P. Sethi, and S.~Shreve.
\newblock Optimal consumption and investment policies allowing consumption
  constraints and bankruptcy.
\newblock {\em Math. Oper. Res.}, 8(4):613--636, 1983.

\bibitem{merton1969lifetime}
R.~C. Merton.
\newblock Lifetime portfolio selection under uncertainty: The continuous-time
  case.
\newblock {\em Rev. Econom. Statist.}, 51(3):247--257, 1969.

\bibitem{merton1971optimum}
R.~C. Merton.
\newblock Optimum consumption and portfolio rules in a continuous-time model.
\newblock {\em J. Econom. Theory}, 3(4):373--413, 1971.

\bibitem{pham:06}
H.~Pham.
\newblock {\em Continuous-time stochastic control and optimization with
  financial applications}, volume~61 of {\em Stochastic Modelling and Applied
  Probability}.
\newblock Springer-Verlag, Berlin, 2009.

\bibitem{rogers2013optimal}
L.~C.~G. Rogers.
\newblock {\em Optimal investment}.
\newblock SpringerBriefs in Quantitative Finance. Springer, Heidelberg, 2013.

\bibitem{SethiTaskar:88}
S.~P. Sethi and M.~Taksar.
\newblock A note on {M}erton's `{O}ptimum consumption and portfolio rules in a
  continuous-time model'.
\newblock {\em J. Econom. Theory}, 46:395--401, 1988.

\end{thebibliography}
\include{bibliography}

\appendix

\section{An example for which $N$ fails to be a supermartingale}
\label{sec:egnonintegrable}
For $R > 1$, the process $N$ in the proof Theorem \ref{thm:fiat} can fail to be supermartingale. We first give an abstract version of an example and then two concrete specifications.
\begin{example}
\label{eg:dncountereg}
Let $(\Pi, C) \in \cA(x)$ be such that $X = X^{x, \Pi, C}$ has $\as{\P}$ positive paths. Define the stopping time
\begin{equation}
\tau := \inf \left\{t \geq 0: \int_0^t \eta^{-R} \sigma \Pi_s X_s^{1-R} e^{-\delta s} \dd W_s = 1\right\}.
\end{equation}
If $\tau$ is bounded, then $N$ fails to be a supermartingale because $\EX{N_\tau} = 1 > 0 = \EX{N_0}$.

The above abstract situation can be achieved either by “wild” investment or by “too fast'' consumption, or a combination of the two.

For an example of a “wild” investment strategy $\Pi$, assume that $\mu \geq r >  0$ and define the stopping time
\begin{equation}
\tilde \tau := \inf \left\{t \geq 0: \int_0^t \frac{\eta^{-R} \sigma e^{-\delta s}}{1 - s} \dd W_s = 1\right\}.
\end{equation}
Note that $\tilde \tau < 1$ $\as{\P}$ since $\int_0^1 \left(\frac{\eta^{-R} \sigma e^{-\delta s}}{1 - s}\right)^2 \dd s = \infty$. Then define $(\Pi, C) \in \cA(x)$ by
\begin{equation*}
\Pi_t = \frac{1}{1 - t} X^{R-1}_t \1_{\{t \leq \tilde \tau\}}, \quad  C_t := r X_t + \Pi_t X_t (\mu - r).
\end{equation*}
Then the corresponding wealth process $X$ is a stopped and time changed CEV process:
\begin{equation}
\dd X_t = X^R_t \frac{\sigma}{1 - t}  \1_{\{t \leq \tilde \tau\}} \dd W_t, \quad X_0 = x.
\end{equation}
Since $R>1$, $X$ remains positive. Since $\tau = \tilde \tau$ $\as{\P}$ we have $\tau<1$ $\as{\P}$ and $N$ fails to be a supermartingale.

For an example of a “too fast” consumption strategy $C$ (with bounded investment strategy $\Pi$), assume that $\mu \geq r >  0$ and define the stopping time
\begin{equation}
\bar \tau := \inf \left\{t \geq 0: \int_0^t \frac{x^{1-R}\eta^{-R} \sigma  e^{\sigma (1-R) W_s - (\delta + \frac{(1-R)}{2} \sigma^2)s}}{1 - s} \dd W_s = 1\right\}.
\end{equation}
Note that $\bar \tau < 1$ $\as{\P}$ since $\int_0^1 (\frac{x^{1-R}\eta^{-R} \sigma  e^{\sigma (1-R) W_s - (\delta + \frac{(1-R)}{2} \sigma^2)s}}{1 - s})^2 \dd s$ $\as{\P}$ Then define $(\Pi, C) \in \cA(x)$ by
\begin{equation*}
\Pi_t = \1_{\{t \leq \bar \tau\}}, \quad  C_t := \frac{1}{R-1}\frac{X_t}{1-t} \1_{\{t \leq \bar \tau\}}+ r X_t + \Pi_t X_t (\mu - r).
\end{equation*}
Then the corresponding wealth process satisfies the SDE
\begin{equation}
\dd X_t = \sigma X_t \1_{\{t \leq \bar \tau\}} \dd W_t - \frac{1}{R-1} \frac{X_t}{1-t}\1_{\{t \leq \bar \tau\}} \dd t, \quad X_0 = x.
\end{equation}
It is not difficult to check that this has the solution
\begin{equation}
X_t = x (1-t \wedge \bar \tau )^{\frac{1}{R-1}}e^{\sigma W_{t \wedge \bar \tau} - \frac{1}{2} \sigma^2{(t \wedge \bar \tau)}}
\end{equation}
which is well-defined and positive by the fact that $\bar \tau < 1$ $\as{\P}$ Since $\tau = \bar \tau$ $\as{\P}$, we have $\tau<1$ $\as{\P}$ and $N$ fails to be a supermartingale.
\end{example}

\section{Verification approaches for $R > 1$}
\label{sec:literatureR>1}

As we have explained in Remark \ref{rem:R>1}, a verification argument for general admissible strategies requires some additional ideas for the case $R > 1$. In this section, we discuss the two most general approaches in the extant stochastic control literature. Both approaches first consider a perturbation of the problem (or the candidate solution) and then let the perturbation disappear.

\subsection{Perturbation with finite bankruptcy}
\label{ssec:KLSS}
The first perturbation approach is by Karatzas et al~\cite{karatzas1986explicit} who study an optimal investment-consumption problem with bankruptcy for a general utility function which is of interest in its own right, building on earlier work \cite{lehoczky1983optimal} by a subset of the authors. In the following, we only describe their contribution towards the solution of the Merton problem for CRRA utilities. We assume $R > 1$, and we use our notation.

Assume that $\delta > 0$ and $r > 0$. 
For an admissible strategy $(\Pi, C) \in \cA(x)$, denote the \emph{bankruptcy time} $\tau_0 = \tau^{x,\Pi,C}_0 = \inf \{ t : X^{x,\Pi,C}_t = 0 \}$. Then choose a finite bankruptcy value ${P \in (-\infty, 0)}$ and consider the problem with bankruptcy:
\begin{equation}
\label{eq:KLSS}
V^P(x) := \sup_{C \in \mathscr{C}(x)} J^P(C) = \sup_{C \in \mathscr{C}(x)} \E \left[ \int_0^{\tau^{x,\Pi,C}_0} e^{-\delta t} U(C_t) \dd t + e^{-\delta \tau^{x,\Pi,C}_0} P \right].
\end{equation}
Note that the classical Merton problem corresponds to the limiting case $P = -\infty$.

Karatzas et al~\cite{karatzas1986explicit} show the following:
\begin{enumerate}
\item [(A)]Suppose that a $C^2$-function $\hat V^P: (0, \infty) \to (P, 0)$ solves the HJB equation corresponding to the optimisation problem \eqref{eq:KLSS} given by
\begin{align}
\label{eq:HJB:KLSS}
\delta \tilde V(x) =& \sup_{c\geq0,\pi}\left[\tilde V'(x)((\mu - r)\pi x + (rx - c)) + \frac{1}{2}\pi^2 \sigma^2 x^2 \tilde V''(x) + U(c) \right], \quad x>0.
\end{align}
subject to $\lim_{x \downarrow 0} \tilde V(x) = P$.

Then $\hat V^P(x) = V^P(x)$ for all $x \in [0, \infty)$.

\item [(B)] For each $P \in (-\infty,0)$, there exists a $C^2$-function $\hat V^P: (0, \infty) \to (P, \infty)$ that solves the HJB equation~\eqref{eq:HJB:KLSS} with $\lim_{x \downarrow 0}\hat V^P(x)=P$.

\item [(C)] $V(x) \leq \lim_{P \downarrow -\infty} \hat V^P(x) =  \hat V(x)$, which together with $\hat V(x) \leq V(x)$ establishes the claim.
\end{enumerate}
Here, the argument for (A) is relatively straightforward; see \cite[Theorem 4.1]{karatzas1986explicit} and not more difficult than the proof of our Theorem 4.3. Similarly, the argument for (C) is easy: the first inequality follows from the fact that $V(x) \leq V^P(x) \leq \hat V^P(x)$ for each $x > 0$ and $P \in \RR_-$ by the definition of $V^P$ and (A); the second inequality is straightforward using the explicit form for $\hat{V}^P$.

But the main difficulty -- and great ingenuity -- of the argument in \cite{karatzas1986explicit} is (B). Indeed, a direct calculation for $r > 0$ case takes at least two pages and yields the answer:
\begin{equation}
\label{eq:VP KLSS}
\hat{V}^P(x) =  \frac{\nu}{\eta(R - \nu)} \left(\frac{\eta}{R}\frac{R - \nu}{1 - \nu}(1-R)P\right)^{\frac{1-\nu}{1-R}} (\hat{C}^P(x))^{\nu - R} + \eta^{-1}\frac{(\hat{C}^P(x))^{1-R}}{1-R},
\end{equation}
where the function $\hat{C}^P(x)$ describing the optimal consumption is the inverse of the function
\begin{equation*}
I^P(c) = - \eta^{-1} \left(\frac{\eta}{R}\frac{\nu-R}{\nu - 1}(1-R)P\right)^{\frac{1-\nu}{1-R}}c^{\nu} + \frac{c}{\eta},
\end{equation*}
and $\nu$ is the negative root of the equation $\frac{\lambda^2}{2} \zeta^2 + (r - \delta - \frac{\lambda^2}{2})R \zeta - rR^2  = 0$.

\subsection{Perturbation of the value function}
\label{ssec:dn}
The second perturbation approach is by Davis and Norman~\cite{davis1990portfolio} who study the Merton problem with transaction costs; the perturbation argument for $R > 1$ in the frictionless case is a fortunate by-product, and not the main contribution of the paper. Again we will use our notation to describe their approach.

Assume that $\delta > 0$ and $r > 0$. Denote by $\cA^b(x)$ the set of all admissible strategies $(\Pi, C)$ for which $\Pi$ is uniformly bounded, write $\cC^b(x)$ for the corresponding set of attainable consumption strategies and set $V^b(x) := \sup_{C \in \cC^b(x)} J(C)$. For $\zeta > 0$, consider the perturbed value function $\hat{V}^{\zeta}(x) = \hat{V}(x+\zeta)$ and for $(\Pi, C) \in \cA^b(X)$ (such that $C^{1-R}$ is integrable with respect to the identity process), consider the process $M^{\zeta}$ defined by
 \begin{equation}
M^{\zeta} = \int_0^t e^{-\delta t} U(C_s) \dd s + e^{-\delta t} \hat V^\zeta(X_t).
 \end{equation}
Then the same argument as in the  proof of Theorem  \ref{thm:fiat} but with $\hat V$ replaced by $\hat V^\zeta$ yields
 \begin{equation}
 \dd M^\zeta_t = \dd N^\zeta_t + { L(\Pi_t, C_t; X_t, \hat{V}^\zeta) \dd t} 
 \end{equation}
{Using that ${\hat V}^\zeta(x) = {\hat V}(x + \zeta)$, it is straightforward to check that $\sup_{\pi \in \R,c \geq 0} L(\pi,c;x,{\hat V}^\zeta) = -r \zeta \hat V^\zeta_x(X_t)e^{-\delta t} \leq 0$. It follows that, under the crucial assumption that $r \geq 0$, we have $L(\Pi_t, C_t; X_t, \hat{V}^\zeta) \leq 0$.} Finally, using that $\Pi$ and $\hat V^\zeta_x$ are bounded, it is not difficult to check that $N^\zeta$ is a square integrable martingale. Now following the proof of Theorem  \ref{thm:fiat}, and using that $|\hat V^\zeta|$ is bounded and $\delta > 0$ it follows that
 \begin{equation*}
 J(C) \leq \limsup \EX{M^\zeta_t} - \liminf \EX{e^{-\delta t} \hat V^\zeta(X^{x, \Pi, C}_t)} \leq \limsup \EX{M^\zeta_t} \leq \hat V^\zeta(x).
 \end{equation*}
 We may conclude that $V^b(x) \leq \hat V^\zeta(x)$ and taking the limit as $\zeta \downarrow 0$, it follows that $V^b(x) = \hat V(x)$.

\section{The Merton problem with logarithmic utility}
\label{sec:log}
The case of logarithmic utility $U(c)= \log (c)$ corresponds to the case of unit coefficient of relative risk-aversion. The function $\log$ differs from the other CRRA utility functions in that it may take both signs but much of the analysis goes through in exactly the same way.

Under logarithmic utility, the problem facing the agent is to choose an admissible strategy $(\Pi,C) \in \mathscr{A}(x)$ so as to find
\begin{equation}\label{eqn: value function definition log case}
	V(x) \coloneqq \sup_{C\in\mathscr{C}(x)} J(C)\coloneqq \sup_{C\in\mathscr{C}(x)}\E \left[\int_0^\infty e^{-\delta t} \log\left(C_t\right) \dd t \right].
\end{equation}
Since $\log$ takes both positive and negative values, we make the definition\footnote{There are other ways
$\E \left[\int_0^\infty e^{-\delta t} \log\left(C_t\right) \dd t \right]$ might be defined. For example, one could make the subtly different definition
$\E \left[\int_0^\infty e^{-\delta t} \log\left(C_t\right) \dd t \right] := \E\left[ \int_0^\infty e^{-\delta t} \log\left(C_t\right)^+ \dd t \right] - \E\left[ \int_0^\infty e^{-\delta t} \log\left(C_t\right)^- \dd t \right]$, with the convention $\infty - \infty := -\infty$. The advantage of our definition is that in the case $\delta \leq 0$ it leads to a much cleaner final statement of results. In the case $\delta>0$, the two definitions are equivalent because then $\E\left[ \int_0^\infty e^{-\delta t} \log\left(C_t\right)^+ \dd t \right] < \infty$ for all admissible consumption streams $C$.}
$$\E \left[\int_0^\infty e^{-\delta t} \log\left(C_t\right) \dd t \right] := \E\left[\left(\int_0^\infty e^{-\delta t} \log\left(C_t\right) \dd t\right)^+ \right] - \E\left[\left(\int_0^\infty e^{-\delta t} \log\left(C_t\right) \dd t\right)^- \right],$$
where for each $\omega$,
$$\int_0^\infty e^{-\delta t} \log\left(C_t\right) \dd t  := \int_0^\infty e^{-\delta t} \log\left(C_t\right)^+ \dd t - \int_0^\infty e^{-\delta t} \log\left(C_t\right)^- \dd t,$$
with the standard convention that $\infty - \infty := -\infty$.

As before, we postulate a constant proportion of wealth for both our optimal investment and our optimal consumption, $\Pi_t = \pi$ and $C_t = \xi X_t$. In this case, our wealth process is given by \eqref{eq:wealth process}. Taking logarithms we find that
\begin{equation}
	\label{eq:log:const}
\log(C_t) =\log(\xi x) + \pi\sigma W_t + \left( r + \lambda\sigma\pi - \xi - \frac{\pi^2 \sigma^2}{2}\right)t.
\end{equation}
This implies that
\begin{equation}\label{eq:ln xi X EX}
	\E [e^{-\delta t} \log(\xi X_t)] = e^{-\delta t}\log(\xi x) + e^{-\delta t}\left( r + \lambda\sigma\pi  - \xi - \frac{\pi^2 \sigma^2}{2}\right)t.
\end{equation}

The well-posedness condition $\eta>0$ for $R=1$ is equivalent to $\delta>0$. So, suppose first that $\delta > 0$. Then
\begin{equation}	
J(\xi X) = 
\frac{1}{\delta^2}\left(\delta \log(\xi) + \delta \log(x) + \left(r + \lambda\sigma\pi  - \xi - \frac{\pi^2 \sigma^2}{2} \right) \right).
\end{equation}
By taking derivatives with respect to $\pi$ and $\xi$, we find that this is maximised at
$\pi = \hat{\pi} := \frac{\lambda}{\sigma}$ and $\xi = \hat{\xi} := \delta$.
Note that this corresponds to the candidate optimal strategy given in \eqref{eq:cand} for $R=1$. 
Then, the candidate value function is given by
\begin{align}\label{eqn: definition of V hat log case}
	\hat{V}(x) \coloneqq& J \left(\hat{\xi} X\right) = \frac{1}{\delta^2}\left(\delta\log(\delta x) + r + \frac{\lambda^2}{2} - \delta\right).
\end{align}

To prove optimality, one considers the stochastically perturbed Merton problem corresponding to the aggregator $U_\epsilon(c,g) = \log (c+\epsilon g)$ and $G = \delta Y$ for $Y$ the wealth process under the candidate optimal strategy. The corresponding version of Theorem \ref{thm:veri} then goes through exactly as when $R\in(0,\infty) \setminus \{1 \}$.

When $\delta \leq 0$ the problem becomes delicate. Set $\kappa:= r + \frac{\lambda^2}{2}$.

If $\kappa > 0$ choose $\Pi_t := \hat \pi = \tfrac{\lambda}{\sigma}$ and $C_t := \tfrac{\kappa}{2} X_t$. Then \eqref{eq:log:const} gives
\begin{equation}
\log(C_t) = \log\left(\frac{\kappa}{2} x\right) + \lambda W_t + \frac{\kappa}{2} t.
\end{equation}
It follows from the strong law of large numbers, that for $\P$-a.e.~$\omega$, $\log(C_t(\omega)) \geq 1$ for all $t$ sufficiently large. Hence
\begin{equation*}
\int_0^\infty e^{-\delta t} \log\left(C_t\right) \dd t = +\infty\;\; \as{ \P}
\end{equation*}
whence $J(C) = +\infty$.

If $\kappa \leq 0$, we show that for \emph{every} admissible consumption stream
\begin{equation}
	\label{eq:log:kappa:neg}
	\int_0^\infty e^{-\delta t} \left(\log\left(C_t\right)\right)^- \dd t = +\infty\;\; \as{\P},
\end{equation}
and hence $J(C) = -\infty$. Here we only consider the case that $r < 0$; the case $r = 0$ (which implies that $\lambda = 0$) is similar but easier.

First, we show that for any $(\Pi, C) \in \cA(x)$, there exists $(\tilde \Pi, 0) \in \cA(x)$ such that
\begin{equation}
		\label{eq:log:kappa:neg:C bound}
e^{rt}\int_0^t e^{-ru}\ C_u \dd u \leq X^{x, \tilde \Pi, 0}_t \;\; \as{ \P}, \quad t \geq 0.
\end{equation}
Let $\theta_t := \frac{\Pi_t X^{x, \Pi, C}_t}{S_t}$ and define the process $ G = (G_t)_{t \geq 0}$ by $G_t = x_0 + \int_0^t {\theta_u S_u} (\sigma \dd W_u +  (\mu-r) \dd u)$.
Then using the product formula and the dynamics of $X^{x, \Pi, C}$, we obtain
\begin{align}
0 & \leq e^{-rt} X^{x, \Pi, C}_t = x_0 + \int_0^t \theta_u S_u \sigma (\dd W_u +  \lambda \dd u) - \int_0^t e^{-ru} C_u \dd u \\
&= G_t - \int_0^t e^{-ru}  C_u \dd u = e^{-rt} X^{x, \tilde \Pi, 0}_t - \int_0^t e^{-ru}  C_u \dd u,
\end{align}
where $\tilde \Pi_t :=  \tfrac{ \theta_t S_t}{G_t} \1_{ \{G_t>0\}} $.
Rearranging gives \eqref{eq:log:kappa:neg:C bound}.

Next, we show that
\begin{equation}
\label{eq:log:LIL}
\liminf_{t \to \infty} X^{x, \tilde \Pi, 0} = 0\;\;\as{\P}
\end{equation}
It\^o's formula and the fact that $\kappa \leq 0$ gives
\begin{equation*}
\log(X^{x, \tilde \Pi, 0}) = \log(x_0) + \int_0^t \tilde \Pi_u  \sigma \dd W_u + \int_0^t \left(\kappa -\frac{(\tilde \Pi_u\sigma - \lambda)^2}{2}\right)\dd u \leq \log(x_0) + \int_0^t \tilde \Pi_u  \sigma \dd W_u.
\end{equation*}
There are two cases: On $\{\int_0^\infty \tilde \Pi_u^2 \dd u< \infty\}$, $\int_0^\infty (\kappa - \tfrac{(\tilde \Pi_u \sigma- \lambda)^2}{2}) \dd u = -\infty$ and  $\lim_{t \to \infty} \int_0^t \tilde \Pi_u  \sigma \dd W_u$ exists in $\RR$, whence $\lim_{t \to \infty} \log(X_t) = -\infty$. On $\{\int_0^\infty \tilde \Pi_u^2 \dd u = \infty\}$, ${\liminf_{t \to \infty} \int_0^t \tilde \Pi_u  \sigma \dd W_u = -\infty}$ by the law of iterated logarithm, whence $\liminf_{t \to \infty} \log(X_t) = -\infty$. So we have \eqref{eq:log:LIL}.

Finally, combining \eqref{eq:log:LIL} with \eqref{eq:log:kappa:neg:C bound} yields
\begin{equation}
\liminf_{t \to \infty}	e^{r t} \int_0^t e^{-r u} C_u \dd u = 0\;\; \as{\P}
\end{equation}
This implies that the random set $A := \{u \in [0, \infty) : C_u < 1/2\}$ has $\as{\P}$ infinite Lebesgue-measure. But this implies that
\begin{equation}
\int_0^\infty e^{-\delta t} \log\left(C_t\right)^- \dd t \geq \int_0^\infty \log\left(C_t\right)^-  \dd t\geq \int_A \log(2) \dd t = +\infty \;\; \as{\P}
\end{equation}

Putting the results together we have the following result for logarithmic utility:
\begin{thm}
\label{thm:log}
Suppose $\delta > 0$. 
Then for $x > 0$,
\begin{align}
    V^*(x) &:= \sup_{C \in \cC^*(x)}J(C) = J(\hat{C}) = \frac{1}{\delta^2}\left(\delta\log(\delta x) + r + \frac{\lambda^2}{2} - \delta\right),
\shortintertext{where the corresponding optimal investment-consumption strategy is given by $(\Pi,C)= (\hat{\Pi},\hat{C})$, where}
    \hat{\Pi} &=\frac{\lambda}{\sigma R}, \quad \hat{C} = \delta  X^{x,\hat \Pi, \hat C}.
    \label{eq:candlog}
\end{align}
Suppose $\delta \leq 0$. Then the problem is ill-posed. For $\kappa := r + \frac{\lambda^2}{2}>0$ we have $V^*(x) =+\infty$, whereas for $\kappa \leq 0$ we have $V^*(x)=-\infty$.
\end{thm}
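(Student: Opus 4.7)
The plan is to split the proof into the three cases from the statement and, in each, to reassemble the calculations already performed in the body of Appendix~\ref{sec:log}. The one case that genuinely requires new work is $\delta>0$, where I will replay the stochastic perturbation argument of Theorem~\ref{thm:veri} with a logarithmic aggregator; the two ill-posed cases follow by the pointwise constructions already laid out above.

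For $\delta>0$, the lower bound $V^*(x)\geq \hat V(x)$ is just $J(\hat C)=\hat V(x)$, the explicit computation \eqref{eqn: definition of V hat log case}. For the upper bound, let $U_\epsilon(c,g)=\log(c+\epsilon g)$, let $Y$ be the candidate optimal wealth process from unit initial wealth and $G=\delta Y$, and define
\[ M^\epsilon_t=\int_0^t e^{-\delta s}\log(C_s+\epsilon G_s)\dd s + e^{-\delta t}\hat V(X_t+\epsilon Y_t). \]
Applying It\^o's formula as in Theorem~\ref{thm:veri}, and using that the log-utility HJB for $\hat V$ is saturated at the pointwise maximiser $(\hat\pi,\hat c)=(\lambda/\sigma,\,\delta z)$ (so that the drift of $M^\epsilon$ is pointwise non-positive for every admissible $(\Pi,C)$), yields $M^\epsilon_t\leq \hat V(x+\epsilon)+N^\epsilon_t$. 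The local martingale $N^\epsilon$ is shown to be a supermartingale by bounding it below by $-\hat V(x+\epsilon)-\Lambda^\epsilon$, where $\Lambda^\epsilon_t=\int_0^t e^{-\delta s}\log(\epsilon G_s)\dd s+e^{-\delta t}\hat V(\epsilon Y_t)$ is a uniformly integrable martingale by the log analog of Remark~\ref{rem:UI}. The required transversality $\liminf_{t\to\infty}\E[e^{-\delta t}\hat V(X_t+\epsilon Y_t)]\geq 0$ follows because $\hat V(z)=\tfrac{1}{\delta}\log(\delta z)+\text{const}$ is monotone and, using $\log Y_t=\lambda W_t+(\kappa-\delta)t$, one has $\E[e^{-\delta t}\log(\delta\epsilon Y_t)]=e^{-\delta t}(\log(\delta\epsilon)+(\kappa-\delta)t)\to 0$ since $\delta>0$. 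Taking expectations, passing to the limit $t\to\infty$, and then letting $\epsilon\downarrow 0$ using $J(C)\leq J_\epsilon(C)$ and continuity of $\hat V$ completes this case.

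For $\delta\leq 0$ with $\kappa>0$, the explicit constant-proportion strategy $\Pi=\lambda/\sigma$, $C_t=(\kappa/2) X_t$ constructed in the body already gives $\log C_t=\log(\kappa x/2)+\lambda W_t+(\kappa/2)t$; by the strong law of large numbers $\log C_t\to+\infty$ $\as{\P}$, whence $J(C)=+\infty$ for this strategy and $V^*(x)=+\infty$.

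The most involved case is $\delta\leq 0$ with $\kappa\leq 0$, which requires $J(C)=-\infty$ for \emph{every} admissible $C$. Here I would simply assemble the three ingredients already developed above: (i) the integration-by-parts bound $e^{rt}\int_0^t e^{-ru}C_u\dd u\leq X^{x,\tilde\Pi,0}_t$ for a suitable pure-investment strategy $\tilde\Pi$ built from $\Pi$; (ii) $\liminf_{t\to\infty} X^{x,\tilde\Pi,0}_t=0$ $\as{\P}$, obtained by splitting on $\{\int_0^\infty\tilde\Pi_u^2\dd u<\infty\}$ versus $\{\int_0^\infty\tilde\Pi_u^2\dd u=\infty\}$ and invoking the law of the iterated logarithm in the second case; (iii) the consequent fact that $\{u:C_u<1/2\}$ has $\as{\P}$ infinite Lebesgue measure, so that $\int_0^\infty e^{-\delta t}\log(C_t)^-\dd t=+\infty$ $\as{\P}$ and hence $J(C)=-\infty$. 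The main technical obstacle is the product-rule manipulation in step (i) when $r<0$; the degenerate situation $r=0$ (which forces $\lambda=0$) is handled separately as a simpler sub-argument, as the body already indicates.
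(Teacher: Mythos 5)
Your proposal is correct and follows essentially the same route as the paper: the paper's proof of Theorem~\ref{thm:log} is precisely the assembly of the preceding computations in Appendix~\ref{sec:log} (the explicit evaluation of $J(\hat\xi X)$, the log version of the stochastic perturbation argument of Theorem~\ref{thm:veri}, the constant-proportion strategy for $\kappa>0$, and the three-step argument for $\kappa\le 0$). The only point you gloss over is that in the $\kappa>0$ case one should also note that $\int_0^\infty e^{-\delta t}\log(C_t)^-\dd t<\infty$ pathwise (it vanishes for large $t$), so that under the paper's $\infty-\infty:=-\infty$ convention the value is indeed $+\infty$; this is immediate for the given strategy.
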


\section{Change of numéraire arguments and the role of $\delta$}
\label{sec:numeraire}
{It is interesting to study how the Merton problem behaves under a {\emph {change of numéraire}}. As we have seen in Appendix~\ref{sec:literatureR>1}, using the perturbation arguments of Karatzas et al~\cite{karatzas1986explicit} or Davis and Norman \cite{davis1990portfolio}, we get verification arguments for the case $R > 1$ under the parameter restrictions $\delta > 0$ and $r > 0$. The goal of this section is to show using a {change of numéraire} that this parameter restriction can be weakened, although not to the extent that it covers all the parameter combinations for which $\eta>0$. We then discuss how these arguments shed some light on the interpretation of the parameter $\delta$.}

A pair $(\tilde S^0, \tilde S) = (\tilde S^0_t, \tilde S_t)_{t \geq 0}$ of semimartingales is said to be \emph{economically equivalent} to $(S^0, S)$ if there exists a positive continuous semimartingale $D = (D_t)_{t\geq 0}$ such that $\tilde S^0 = D S^0$ and $\tilde S = D S$. Here, the interpretation of $D$ is an exchange rate process and $(\tilde S^0, \tilde S)$ describes the financial market in a different currency unit; see \cite[Section 2.1]{herdegen:17} for more details. We will restrict attention to deterministic processes $D$ in which case $D$ is better described as a change in accounting units.

Next, recall that if $(\theta^0, \theta, C)$ is a admissible investment-consumption strategy for initial wealth $x > 0$, (where $\theta^0$ and $\theta^1$ denote the \emph{number of shares} held in the riskless and risky asset, respectively), then the corresponding wealth process $X = \theta^0_t S^0_t+ \theta_t S$ satisfies the SDE
\begin{equation}
	\dd X_t = \theta^0_t \dd S^0_t+ \theta_t \dd S_t - C_t \dd t.
\end{equation}
Now if $(\tilde S^0, \tilde S)$ is economically equivalent to $(S^0, S)$ with corresponding exchange rate process $D$, it is not difficult to check that the corresponding wealth process $\tilde X := \theta^0 \tilde S^0 + \theta \tilde S = D X$ satisfies the SDE
\begin{align}
	\diff \tilde X_t
	&= \theta^0_t \dd \tilde S^0 + \theta_t \dd \tilde S- \tilde C_t \dd t,
\end{align}
where $\tilde C = D C$. This means that if $C$ describes an attainable consumption strategy in units corresponding to $(S^0, S)$, then $\tilde C = D C$ describes the \emph{same} consumption strategy in units corresponding to $(\tilde S^0, \tilde S)$ (which is also attainable for those units).

Consider now the case that $D_t = e^{\gamma t}$ for some $\gamma \in \RR$. Then $(\tilde S^0, \tilde S)$ is again a Black-Scholes-Merton model with interest rate $\tilde r = r + \gamma$, drift $\tilde \mu = \mu + \gamma$ and volatility $\tilde \sigma = \sigma$. Let $C$ be an attainable consumption strategy in units corresponding to $(S^0, S)$ and $\tilde C = D C$ the corresponding attainable consumption strategy in units corresponding to $(\tilde S^0, \tilde S)$. Then
$\tilde{C}/\tilde{S}^0 = DC/D S^0 = C/S^0$ and
\begin{align}
	J(C; \delta) ~~:=&~~~ \EX{\int_0^\infty \frac{e^{-\delta t}}{1-R} C^{1-R}_t \dd t} = \EX{\int_0^\infty \frac{e^{-(\delta + r(R-1))t}}{1-R}  \left(\frac{C_t}{S^0_t}\right)^{1-R}\dd t} \label{eq:Jnumeraire1} \\
	=&~~~ J(C/S^0; \delta + r(R-1)) =J(\tilde C/\tilde S^0; \delta + (\tilde{r}-\gamma)(R-1)) \nonumber \\
	=&~~~ \EX{\int_0^\infty \frac{e^{-(\delta - (R-1) \gamma + \tilde r (R-1))t}}{1-R}  \left(\frac{\tilde C_t}{\tilde S^0_t}\right)^{1-R}\dd t} = \EX{\int_0^\infty \frac{e^{-(\delta - (R-1) \gamma) t}}{1-R} \tilde C^{1-R}_t \dd t} \nonumber \\
	=&~~~ J(\tilde C; \delta - (R-1) \gamma). \label{eq:Jnumeraire2}
\end{align}

It follows from the above calculation that the Merton problem for $R, r, \mu, \sigma, \delta$ is equivalent to the Merton problem for $R, r + \gamma, \mu + \gamma,$ $\sigma, \delta - (R-1) \gamma$ for each $\gamma \in \RR$. Note in particular, that the well-posedness parameter $\eta$ from \eqref{eqn: definition of eta} is independent of the choice of accounting units.
This means that if we have a verification argument for the parameters $R, r + \gamma, \mu + \gamma, \sigma, \delta - (R-1) \gamma$, we also have verification argument for the parameters $R, r, \mu, \sigma, \delta$. Hence, if $\delta + r(R-1)>0$ we can choose $\gamma = \frac{\delta - r(R-1)}{2(R-1)}$ so that $\tilde{\delta} = \tilde{r} = \frac{\delta + r(R-1)}{2}>0$ and then we can
extend the verification arguments of Karatzas et al~\cite{karatzas1986explicit} or Davis and Norman \cite{davis1990portfolio} to this case. It follows that instead of needing to assume $\delta>0$ and $r>0$ as in \cite{karatzas1986explicit} and \cite{davis1990portfolio} it is sufficient to assume only that $\delta + r(R-1) > 0$.

Nonetheless, the condition $\delta + r(R-1) > 0$ is stronger than the condition for a well-posed problem (namely $\eta>0$) and there are parameter values which we would like to consider (and which are covered by Theorem~\ref{thm:veri}) for which the verification arguments of \cite{karatzas1986explicit} and \cite{davis1990portfolio} do not apply, even after the change of num\'{e}raire arguments of this section.

\medskip{}
The above ideas also shed some light on the interpretation of the parameter $\delta$. To this end, consider an alternative formulation of the Merton problem and associate to an attainable consumption stream $C$ the expected utility
	\begin{equation}
		K(C; \phi) = \EX{\int_0^\infty \frac{e^{-\phi t}}{1-R}  \left(\frac{C_t}{S^0_t}\right)^{1-R} \dd t},
		\label{eq:Kdef}
	\end{equation}
	where $\phi:=\delta + r(R-1)$ is the {\em impatience rate}. Then $K(C;\phi) = J(C, \phi - r(R-1))$. In order to emphasise the dependence of the problem on the accounting units which are being used we might expand the notation to write $J(C;S^0,S;\delta)$ and $K(C;S^0,S;\phi)$ and then \eqref{eq:Jnumeraire2} becomes
	\[ J(C;S^0,S;\delta) = J(\tilde{C};\tilde{S}^0,\tilde{S};\delta-(R-1)\gamma), \]
	whilst, for $K(C,\phi) = K(C; S^0,S,\phi)$ we find
	\[ K(\tilde{C}; \tilde{S}^0,\tilde{S},\phi) = \EX{\int_0^\infty \frac{e^{-\phi t}}{1-R}  \left(\frac{D_t C_t}{D_t S^0_t}\right)^{1-R} \dd t}
	= K({C};{S}^0,{S},\phi). \]
	In particular, $K$ defined via \eqref{eq:Kdef} has the advantage that (unlike $J$) it is num\'{e}raire-independent in the sense that a change of accounting unit leaves the problem value unchanged.\footnote{This applies not only to deterministic changes of accounting units, but also to stochastic changes of numéraire.} With this in mind it makes sense to focus on the impatience rate $\phi$ rather than the discount rate $\delta$.
	Note that $\eta = \frac{1}{R}\phi + \frac{(R-1)}{R} \frac{\lambda^2}{2R}$ so that the optimal consumption rate is a linear (convex if $R>1$) combination of the impatience rate and (half of) the squared Sharpe ratio per unit of risk aversion, with the weights depending on the risk aversion.

\section{The dual approach}

For completeness, we include a brief description of the dual approach to the Merton problem. This is a static argument in the sense that we replace the dynamic admissibility condition --- that the wealth process is negative at all times --- with a static \textit{budget feasibility} condition. As before, we only deal with a Black\textendash Scholes\textendash Merton financial market, and in this case the argument is particular simple since the market is complete and hence there is exactly one equivalent martingale measure.

Define the state-price density process $\zeta = (\zeta_t)_{t\geq0}$ by
\begin{equation}
\label{def:spd}
\zeta_t = e^{- r t} \cE(\lambda W)_t = \exp \left( \lambda W_t - \left( r + \frac{\lambda^2}{2} \right) t \right).
\end{equation}
The following proposition gives a neat equivalent criterion for a consumption process to be admissible in terms of the state price density.

\begin{proposition}
\label{prop:admissiblityandbudgetfeasibilityequivalent}
	A non-negative progressively measurable process $C$ is in $\cC(x)$ if and only if the \textit{budget feasibility condition} holds:
	\begin{equation}\label{eq:budget feasibility condition}
		\EX{\int_0^\infty \zeta_s C_s \dd s}\leq x.
	\end{equation}
\end{proposition}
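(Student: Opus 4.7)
The plan is to prove the two implications separately, leveraging the market completeness of the Black\textendash Scholes\textendash Merton model. The starting point for both directions is the following identity: applying It\^{o}'s formula to $\zeta X$ for any admissible wealth process $X = X^{x,\Pi,C}$, and combining the dynamics of $\zeta$ in \eqref{def:spd} with \eqref{eqn:original wealth process}, one finds that
\[ Z_t := \zeta_t X_t + \int_0^t \zeta_s C_s \dd s \]
is a continuous local martingale with $Z_0 = x$, whose martingale part is driven by $\zeta_t X_t (\sigma \Pi_t - \lambda) \dd W_t$.

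For the ``only if'' direction, fix $C \in \cC(x)$ with corresponding $(\Pi, C) \in \cA(x)$ and nonnegative wealth process $X$. Then $Z$ is a nonnegative continuous local martingale, hence a supermartingale, and $\EX{\zeta_t X_t} + \EX{\int_0^t \zeta_s C_s \dd s} = \EX{Z_t} \leq x$ for every $t \geq 0$. Since $\zeta_t X_t \geq 0$, this yields $\EX{\int_0^t \zeta_s C_s \dd s} \leq x$, and letting $t \to \infty$ the monotone convergence theorem gives \eqref{eq:budget feasibility condition}.

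For the converse, I would use the martingale representation theorem to reverse-engineer a replicating strategy. Given $C$ nonnegative and progressively measurable with $\EX{\int_0^\infty \zeta_s C_s \dd s} \leq x$, define the uniformly integrable martingale $M_t := \cEX{\int_0^\infty \zeta_s C_s \dd s}{\F_t}$, which satisfies $M_0 \leq x$. The Brownian martingale representation theorem provides a progressively measurable $\phi$ with $\int_0^T \phi_s^2 \dd s < \infty$ $\as{\P}$ for each $T > 0$ such that $M_t = M_0 + \int_0^t \phi_s \dd W_s$. Set
\[ X_t := \frac{1}{\zeta_t}\Bigl(M_t + (x - M_0) - \int_0^t \zeta_s C_s \dd s\Bigr). \]
The bound $M_t \geq \cEX{\int_0^t \zeta_s C_s \dd s}{\F_t} = \int_0^t \zeta_s C_s \dd s$ together with $x - M_0 \geq 0$ ensures $X \geq 0$, and $X_0 = x$ is immediate. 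Matching the dynamics $\dd(\zeta_t X_t) = \phi_t \dd W_t - \zeta_t C_t \dd t$ against the identity in the opening paragraph identifies the investment proportion $\Pi_t := \sigma^{-1}(\phi_t/(\zeta_t X_t) + \lambda)$ on $\{X_t > 0\}$, and one sets $\Pi_t := 0$ otherwise.

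The main obstacle is verifying that the candidate $(\Pi, C)$ constructed in the converse direction truly lies in $\cA(x)$. Two technical points arise. First, the local integrability of $\Pi$ against $S$ follows from the a.s.\ local square-integrability of $\phi$ delivered by martingale representation. Second, one must treat the potential bankruptcy time $\tau_0 := \inf\{t : X_t = 0\}$: note that $N_t := M_t - \int_0^t \zeta_s C_s \dd s = \cEX{\int_t^\infty \zeta_s C_s \dd s}{\F_t}$ is a nonnegative supermartingale, so if it hits zero it remains at zero by absorption; this forces $C \equiv 0$ on $[\tau_0, \infty)$ $\as{\P}$, whence $X \equiv 0$ there, and admissibility is preserved.
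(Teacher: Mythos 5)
Your proof of the ``only if'' direction is essentially identical to the paper's: both form the nonnegative local martingale $\zeta_t X_t + \int_0^t \zeta_s C_s\dd s$, invoke the supermartingale property, and pass to the limit (the paper via Fatou, you via monotone convergence applied to $\E[\int_0^t\zeta_s C_s\dd s]\leq x$ --- the two are interchangeable here). Where you go beyond the paper is the converse: the published proof explicitly proves \emph{only} necessity, remarking that sufficiency ``can be proved using the Brownian martingale representation theorem'' and omitting the details. Your replication construction --- setting $M_t=\cEX{\int_0^\infty\zeta_s C_s\dd s}{\F_t}$, defining $X_t=\zeta_t^{-1}(M_t+(x-M_0)-\int_0^t\zeta_s C_s\dd s)$, and reading off $\Pi$ from the representation integrand $\phi$ --- is the standard complete-market argument and is sound, including the observations that $X\geq 0$ follows from $M_t\geq\int_0^t\zeta_sC_s\dd s$ and that absorption of the nonnegative supermartingale $\cEX{\int_t^\infty\zeta_sC_s\dd s}{\F_t}$ handles the bankruptcy time. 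The only point worth tightening is admissibility in the paper's precise sense: one should note that $\sigma\theta_tS_t=\sigma\Pi_tX_t=\phi_t/\zeta_t+\lambda X_t$ is locally square-integrable (since $\phi$ is and $\zeta,X$ are continuous with $\zeta>0$), which gives integrability of $\theta$ against $S$; this is a minor bookkeeping step, not a gap in the idea.
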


\begin{proof}
We only prove necessity of \eqref{eq:budget feasibility condition} since that is all that we require for our subsequent arguments. Sufficiency can be proved using the the Brownian martingale representation theorem. 

	Suppose that $C \in \cC(x)$. Let $\Pi$ be the corresponding investment process such that $(\Pi,C) \in \cA(x)$. Denote the corresponding wealth process by $X = X^{x,\Pi, C}$ and define the
nonnegative process $Y = (Y_t)_{t \geq 0}$ by
	$Y_t = \xi_t X_t + \int_0^t \zeta_s C_s \dd s.$
The product rule, the definition of  $\zeta$ in \eqref{def:spd} and the dynamics of $X$ from \eqref{eqn:original wealth process} give
	\begin{equation}
\label{eq:dY}
	\dd Y_t = \zeta_t \dd X_t + X_t \dd \zeta_t + \dd \langle \zeta, X \rangle_t + \zeta_t C_t \dd t
		= \zeta_t X_t (\Pi_t \sigma - \lambda) \dd W_t.
	\end{equation}
Hence, $Y$ is a nonnegative local martingale and therefore a supermartingale. Using $\zeta_t X_t \geq 0$ and Fatou's lemma we conclude that
	\begin{equation}
		x = Y_0 \geq \liminf_{t\to\infty}\EX{Y_t} \geq \EX{\int_0^\infty \zeta_s C_s \dd s}.
	\end{equation}
This ends the proof.
\end{proof}

The verification argument then goes as follows. First, the candidate consumption process $\hat{C}$ from \eqref{eq:cand} satisfies
\begin{equation}
		\label{eq:spd}
	e^{-\delta t} U'(\hat{C}_t) = {\hat V}_x(x)\zeta_t,
\end{equation}
as well as
\begin{equation}
	\label{eq:budget:Chat}
	\EX{\int_0^\infty \zeta_s \hat C_s \dd s} = x.
\end{equation}
Then for an admissible $C\in \cC(x)$, using the budget condition \eqref{eq:budget feasibility condition} for $C$ and as well as the budget condition 	\eqref{eq:budget:Chat} for $\hat C$, together with \eqref{eq:spd} and the simple fact that the concave function $U$ is bounded above by its tangent, i.e., $U(a) \leq U(a) + (b-a)U'(a)$ for $a >0, b\geq0$, we obtain
\begin{align}
	\EX{\int_0^\infty e^{-\delta t} U(C_t) \dd t} \leq&~\EX{\int_0^\infty \left( e^{-\delta t} U(\hat{C}_t) + e^{-\delta t} U'(\hat{C}_t)(C_t - \hat{C}_t) \right)\dd t}
	\\ \label{eq:duality inequality}
	=&~ \EX{\int_0^\infty e^{-\delta t} U(\hat{C}_t) \dd t} + {\hat V}_x (x) \EX{\int_0^\infty \zeta_t(C_t - \hat{C}_t) \dd t} \\
 \leq &~ \EX{\int_0^\infty e^{-\delta t} U(\hat{C}_t) \dd t} = \hat{V}(x).
\end{align}

This proves optimality of $\hat{C}$.

\end{document}